\newtheorem{theorem}{Theorem}
\newtheorem{lemma}{Lemma}
\newtheorem{example}{Example}
\newtheorem{corollary}{Corollary}
\newtheorem{definition}{Definition}
\newcommand{\size}[1]{\mathsf{size}(#1)}
\newtheorem*{claim}{Claim}
\newcommand{\var}{\mathsf{var}}
\def\hy{\hbox{-}\nobreak\hskip0pt} 
\newcommand{\SB}{\{\,}  \newcommand{\SE}{\,\}}
\begin{document}

\title{On Compiling Structured CNFs to OBDDs}
\author{Simone Bova\thanks{Supported by the European Research Council (Complex Reason, 239962) 
and the FWF Austrian Science Fund (Parameterized Compilation, P26200).} \and Friedrich Slivovsky
\thanks{Supported by the European Research Council (Complex Reason, 239962).}}

\date{\small Vienna University of Technology}

\maketitle

\begin{abstract}
We present new results on the size of OBDD 
  representations of structurally characterized classes of CNF formulas.  
First, we identify a natural sufficient condition, which we call the \emph{few subterms} property, 
for a class of CNFs to have polynomial OBDD size; 
we then prove that CNFs whose incidence graphs are \emph{variable convex} 
have few subterms (and hence have polynomial OBDD size), and observe that 
the few subterms property also explains the known fact that classes of CNFs of \emph{bounded treewidth} 
have polynomial OBDD size.  Second, we prove an exponential lower bound on the OBDD size of a 
family of CNF classes with incidence graphs of \emph{bounded degree}, 
exploiting the combinatorial properties of \emph{expander graphs}.
\end{abstract}

\section{Introduction}

\noindent \textit{Motivation.} A fundamental theoretical task in the study of Boolean functions is to relate the size of their encodings in different representation languages. In particular, the representation of circuits as binary decision diagrams (also known as branching programs) has been the subject of intense study in complexity theory (see, for instance~\cite[Chapter~14]{W00} and \cite[Part V]{J12}).  
In this paper, we study the \emph{ordered binary decision diagram (OBDD)} representations of Boolean functions given as propositional formulas in \emph{conjunctive normal form (CNF)}. In contrast to other variants of binary decision diagrams, equivalence of OBDDs can be decided in polynomial time, a crucial feature for basic applications in the areas of verification and synthesis \cite{BW99}.  \begin{figure}
\begin{center}
\input{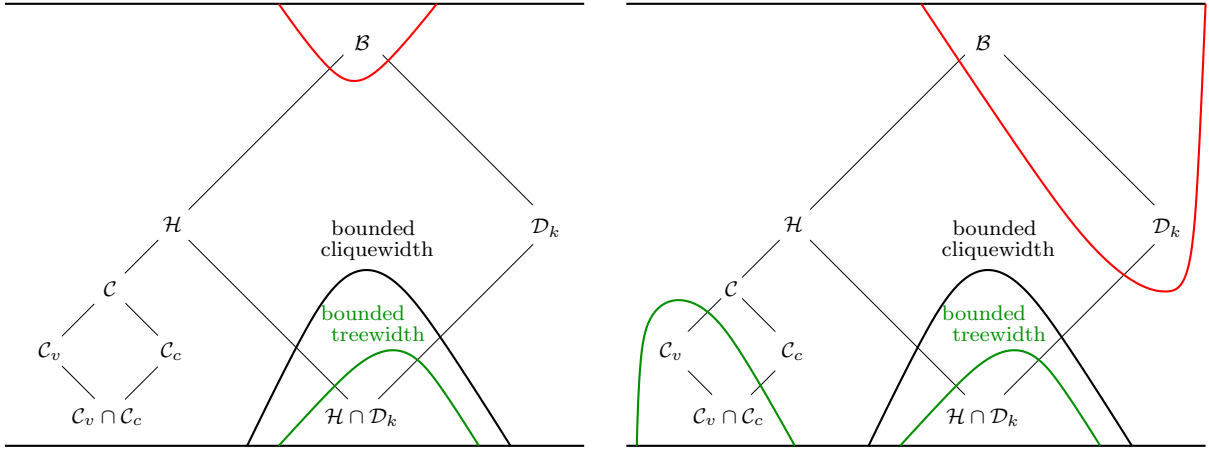} 
\end{center}
\caption{The diagram depicts a hierarchy of classes of bipartite graphs under the inclusion relation (thin edges).  
$\mathcal{B}$, $\mathcal{H}$, $\mathcal{D}_k$, $\mathcal{C}$, $\mathcal{C}_v$, and $\mathcal{C}_c$ 
denote, respectively, bipartite graphs, chordal bipartite graphs (corresponding to beta acyclic CNFs), 
bipartite graphs of degree at most $k$ ($k \geq 3$), 
convex graphs, left (variable) convex graphs, and right (clause) convex graphs.  
The class $\mathcal{C}_v \cap \mathcal{C}_c$ of biconvex graphs 
and the class $\mathcal{D}_k$ of bipartite graphs of degree at most $k$ 
have unbounded clique\hy width.  The class $\mathcal{H} \cap \mathcal{D}_k$ of 
chordal bipartite graph of degree at most $k$ has bounded treewidth.  The green and red 
curved lines enclose, respectively, classes of incidence graphs whose CNFs have 
polynomial time OBDD compilation, and classes of incidence graphs whose CNFs 
have exponential size OBDD representations; the right hand picture 
shows the compilability frontier, updated in light of Results 1 and 2.}
\label{fig:hierarchy}
\end{figure} 

Perhaps somewhat surprisingly, the question of which classes of CNFs can be represented as (or \emph{compiled} into, in the jargon of knowledge representation) OBDDs of polynomial size is largely unexplored~\cite[Chapter~4]{W00}. We approach this classification problem by considering \emph{structurally} characterized CNF classes, that is, 
classes of CNF formulas defined in terms of properties of their \emph{incidence graphs} (the incidence graph of a formula is the bipartite graph on clauses and variables where a variable is adjacent to the clauses it occurs in). Figure~\ref{fig:hierarchy} depicts a hierarchy of well\hy studied bipartite graph classes as considered by Lozin and Rautenbach~\cite[Figure~2]{LR04}. This hierarchy is particularly well\hy suited for our classification project as it includes prominent cases such as beta acyclic CNFs~\cite{BBCM14} and bounded clique\hy width CNFs \cite{STV14}. When located within this hierarchy, the known bounds on the OBDD size of structural CNF classes leave a large gap (depicted \emph{on the left} of Figure~\ref{fig:hierarchy}):
\begin{itemize}
\item On the one hand, we have a polynomial upper bound on the OBDD size of bounded treewidth CNF classes proved recently by Razgon \cite{R14}. The corresponding graph classes are located at the bottom of the hierarchy.
\item On the other hand, there is an exponential lower bound for the OBDD size of general CNFs, proved two decades ago by Devadas~\cite{D93}. The corresponding graph class is not chordal bipartite, has unbounded degree and unbounded clique\hy width, and hence is located at the top of the hierarchy.
\end{itemize}
\textit{Contribution.} In this paper, we tighten this gap as illustrated \emph{on the right} in Figure~\ref{fig:hierarchy}.  More specifically, we prove new bounds for two structural classes of CNFs. 

\medskip \noindent Our first result is a polynomial upper bound:

\medskip \noindent \textbf{Result 1.} CNFs whose incidence graphs are \emph{variable convex} have polynomial OBDD size (Theorem~\ref{th:leftconvex}).

\medskip Convexity is a property of bipartite graphs that has been extensively studied in the area of combinatorial optimization~\cite{Glover67,Gallo84,SY93}, 
and that can be detected in linear time~\cite{BL76,KKLV11}. To prove Result~1, we define a property of CNF classes, called the \emph{few subterms} property, that naturally arises as a sufficient condition for
polynomial size compilability when considering OBDD representations of CNF formulas (Theorem~\ref{th:psizeimpliesptime}), and then prove that CNFs with variable convex incidence graphs have this property (Lemma~\ref{lemma:convexsmallproj}). The few subterms property can also be invoked in proving the previously known result that classes of CNFs with incidence graphs of \emph{bounded treewidth} have OBDD representations of polynomial size (Lemma~\ref{lemma:twfewsubterms}). In fact, both the result on variable convex CNFs and the result on bounded treewidth CNFs can be improved to polynomial \emph{time} compilation by appealing to a stronger version of the few subterms property (Theorem~\ref{th:leftconvex} and Theorem~\ref{th:twpolytime}).

In an attempt to push the few subterms property further, we adopt the language of \emph{parameterized complexity} to formally capture the idea that CNFs \lq\lq close\rq\rq\ to a class with few subterms have \lq\lq small\rq\rq\ OBDD representations. More precisely, defining the \emph{deletion distance} of a CNF from a CNF class 
as the number of its variables or clauses that have to be deleted in order for the resulting formula to be in the class, we prove that CNFs have fixed\hy parameter tractable OBDD size parameterized by the deletion distance from a CNF class with few subterms (Theorem~\ref{th:fptsizecompiln}).  
This result can again be improved to fixed\hy parameter \emph{time} compilation under additional assumptions (Theorem~\ref{th:fptcompil}), yielding for instance fixed\hy parameter tractable time compilation of CNFs into OBDDs parameterized by the \emph{feedback vertex set} size (Corollary~\ref{cor:fvs}).

\medskip \noindent Our second result is an exponential lower bound:

\medskip \noindent \textbf{Result 2.} There is a class of CNF formulas with incidence graphs of bounded degree such that every formula~$F$ in this class has OBDD size at least $2^{\Omega(\mathsf{size}(F))}$, where $\mathsf{size}(F)$ denotes the number of variable occurrences in $F$ (Theorem~\ref{thm:mainnontechnical}).

\medskip Observe that this bound is tight: every CNF on $n$ variables has an OBDD of size $O(2^n)$. 
To establish the lower bound we use the powerful combinatorial machinery of \emph{expander graphs}. 
Despite expander graphs appearing in many areas of mathematics and computer science \cite{HLW06,L11}, 
including circuit and proof complexity~\cite{J12}, their application in this setting is novel and allows us to 
improve the best known lower bound on the OBDD size of CNFs~\cite{D93} in two ways. 
\begin{itemize}
 \item First, the formulas 
used to prove the latter bound give rise to OBDDs of size $2^{\Omega(n)}$ but ``only'' yield lower bounds of the form 
$2^{\Omega(\sqrt{\mathsf{size}(F)})}$. 
\item Second, our lower bound is established for CNF formulas that satisfy strong syntactic restrictions: 
each clause has exactly two positive literals and each variable occurs at most $3$ times; 
in particular, it holds for \emph{read $3$ times monotone $2$\hy CNF} formulas.  
This nicely complements the known fact that read\hy once formulas have polynomial OBDD size~\cite{FFK88}; 
to the best of our knowledge, it was not even known that $3$ times formulas have super\hy polynomial OBDD size.
\end{itemize}


\medskip \noindent \textit{Organization.} The paper is organized as follows. In Section~\ref{sect:prel}, we introduce basic notation and terminology. 
In Section~\ref{sect:positive}, we prove that the few subterms property implies polynomial OBDD size for CNF classes, 
and prove that variable\hy convex CNFs (and bounded treewidth CNFs) have the few subterms property 
(fixed\hy parameter tractable size and time compilability results based on the few subterms property 
are presented in Section~\ref{sect:fpt}). In Section~\ref{sect:negative}, we prove an exponential lower bound 
on the OBDD size of CNF formulas based on expander graphs. Finally, we present our conclusions in Section~\ref{sect:conclusion}.

\section{Preliminaries}\label{sect:prel}
Let $X$ be a countable set of \emph{variables}. A \emph{literal} is a variable $x$ or a negated variable $\neg x$. If $x$ is a variable we let $\mathsf{var}(x) = \mathsf{var}(\neg x) = x$. A \emph{clause} is a finite set of literals. For a clause $c$ we define $\mathsf{var}(c) = \{ \var(l) \mid l \in c \}$. If a clause contains a literal negated as well as unnegated it is \emph{tautological}. A \emph{conjunctive normal form (CNF)} is a finite set of non\hy tautological clauses. If $F$ is a CNF formula we let $\mathsf{var}(F) = \bigcup_{c \in F} \mathsf{var}(c)$. The \emph{size} of a clause $c$ is $|c|$, and the \emph{size} of a CNF $F$ is $\mathsf{size}(F) = \sum_{c \in F} |c|$. An \emph{assignment} is a mapping $f \colon X' \to \{0,1\}$, where $X' \subseteq X$; we identify $f$ with the set $\{ \neg x \mid x \in X', f(x)=0 \} \cup \{ x \mid x \in X', f(x)=1 \}$. An assignment $f$ \emph{satisfies} a clause $c$ if $f \cap c \neq \emptyset$; for a CNF $F$, we let $F[f]$ denote the CNF containing the clauses in $F$ not satisfied by $f$, restricted to variables in $X \setminus \mathsf{var}(f)$, that is, $F[f] = \{ c \setminus \{ x, \neg x \mid x \in \mathsf{var}(f) \} \mid \text{$c \in F$, $f \cap c = \emptyset$} \}\text{;}$ then, $f$ \emph{satisfies} $F$ if $F[f] =\emptyset$, that is, if it satisfies all clauses in $F$. If $F$ is a CNF with $\var(F) = \{x_1, \dots, x_n\}$ we define the Boolean function $F(x_1, \dots, x_n)$ \emph{computed by $F$} as $F(b_1, \dots, b_n) = 1$ if and only if the assignment $f_{(b_1, \dots, b_n)}: \var(F) \rightarrow \{0, 1\}$ given by $f_{(b_1, \dots, b_n)}(x_i) = b_i$ satisfies the CNF $F$.

A \emph{binary decision diagram (BDD)} $D$ on variables $\{x_1, \ldots, x_n\}$ is a labelled directed acyclic graph satisfying the following conditions: $D$ has at at most two vertices without outgoing edges, called \emph{sinks} of $D$. Sinks of $D$ are labelled with $0$ or $1$; if there are exactly two sinks, one is labelled with $0$ and the other is labelled with $1$. Moreover, $D$ has exactly one vertex without incoming edges, called the \emph{source} of $D$. Each non\hy sink node of $D$ is labelled by a variable $x_i$, and has exactly two outgoing edges, one labelled $0$ and the other labelled $1$. Each node $v$ of $D$ represents a Boolean function $F_v = F_v(x_1, \ldots, x_n)$ in the following way. Let $(b_1, \dots, b_n) \in \{0, 1\}^n$ and let $w$ be a node labelled with $x_i$. We say that $(b_1, \dots, b_n)$ \emph{activates} an outgoing edge of $w$ labelled with $b \in \{0, 1\}$ if $b_i = b$. Since $(b_1, \dots, b_n)$ activates exactly one outgoing edge of each non\hy sink node, there is a unique sink that can be reached from $v$ along edges activated by $(b_1, \dots, b_n)$. We let $F_v(b_1, \dots, b_n) = b$, where $b \in \{0, 1\}$ is the label of this sink. The function \emph{computed by $D$} is $F_s$, where~$s$ denotes the (unique) source node of $D$. The \emph{size} of a BDD is the number of its nodes.

An \emph{ordering} $\sigma$ of a set $\{x_1,\ldots,x_n\}$ is a total order on $\{x_1,\ldots,x_n\}$. If $\sigma$ is an ordering of $\{x_1, \ldots, x_n\}$ we let $\mathsf{var}(\sigma) = \{x_1, \dots, x_n\}$. Let $\sigma$ be the ordering of $\{1,\ldots,n\}$ given by $x_{i_1}<x_{i_2}< \cdots <x_{i_n}$. For every integer $0 < j \leq n$, the \emph{length $j$ prefix} of $\sigma$ is the ordering of $\{x_{i_1},\ldots,x_{i_j}\}$ given by $x_{i_1}< \cdots <x_{i_j}$. A \emph{prefix} of $\sigma$ is a length $j$ prefix of $\sigma$ for some integer $0<j \leq n$.  For orderings $\sigma=x_{i_1}< \cdots <x_{i_n}$ of $\{x_1,\ldots,x_n\}$ and $\rho=y_{i_1}< \cdots <y_{i_m}$ of $\{y_1,\ldots,y_m\}$, we let $\sigma\rho$ denote the ordering of $\{x_1,\ldots,x_n,y_1,\ldots,y_m\}$ given by $x_{i_1}< \cdots <x_{i_n}<y_{i_1}< \cdots <y_{i_m}$.

Let $D$ be a BDD on variables $\{x_1, \ldots, x_n\}$ and let $\sigma = x_{i_1} < \cdots < x_{i_n}$ be an ordering of $\{x_1, \dots, x_n\}$. The BDD $D$ is a \emph{$\sigma$\hy ordered binary decision diagram ($\sigma$\hy OBDD)} if $x_i < x_j$ (with respect to $\sigma$) whenever $D$ contains an edge from a node labelled with $x_i$ to a node labelled with $x_j$. A BDD $D$ on variables $\{x_1, \ldots, x_n\}$ is an \emph{ordered binary decision diagram (OBDD)} if there is an ordering $\sigma$ of $\{x_1, \ldots,x_n\}$ such that $D$ is a $\sigma$\hy OBDD. For a Boolean function $F = F(x_1, \ldots, x_n)$, the \emph{OBDD size} of $F$ is the size of the smallest OBDD on $\{x_1, \ldots, x_n\}$ computing $F$.

We say that a class $\mathcal{F}$ of CNFs has \emph{polynomial\hy time compilation into OBDDs} if there is a polynomial\hy time algorithm that, given a CNF $F \in \mathcal{F}$, returns an OBDD computing the same Boolean function as $F$. We say that a class $\mathcal{F}$ of CNFs \emph{has polynomial size compilation into OBDDs} if there exists a polynomial $p \colon \mathbb{N} \to \mathbb{N}$ such that, for all CNFs $F \in \mathcal{F}$, there exists an OBDD of size at most $p(\mathsf{size}(F))$ that computes the same function as $F$.

For standard graph theoretic terminology, see~\cite{D2010}.  Let $G = (V, E)$ be a graph.  
The \emph{(open) neighborhood} of $W$ in $G$, in symbols $\mathsf{neigh}(W,G)$, is defined by 
\[\mathsf{neigh}(W,G) =\{ v \in V \setminus W \mid \text{there exists $w \in W$ such that $vw \in E$}\}\text{.}\]
We freely use $\mathsf{neigh}(v,G)$ as a shorthand for $\mathsf{neigh}(\{v\},G)$, 
and we write $\mathsf{neigh}(W)$ instead of $\mathsf{neigh}(W,G)$ if the graph $G$ is clear from the context. 

A graph $G = (V,E)$ is \emph{bipartite} if it its vertex set $V$ can be partitioned into two blocks $V'$ and $V''$ such that, for every edge $vw \in E$, we either have $v \in V'$ and $w \in V''$, or $v \in V''$ and  $w \in V'$. In this case we may write $G = (V', V'', E)$. The \emph{incidence} graph of a CNF $F$, in symbols $\mathsf{inc}(F)$, is the bipartite graph $(\mathsf{var}(F), F, E)$ such that $vc \in E$ if and only if $v \in \mathsf{var}(F)$, $c \in F$, and $v \in \mathsf{var}(c)$; that is, the blocks are the variables and clauses of $F$, and a variable is adjacent to a clause if and only if the variable occurs in the clause. 

A bipartite graph $G =(V,W,E)$ is \emph{left convex} if there exists an ordering $\sigma$ of $V$ such that the following holds: if $wv$ and $wv'$ are edges of $G$ and $v < v''  < v'$ (with respect to the ordering $\sigma$) then $wv''$ is an edge of $G$. The ordering~$\sigma$ is said to \emph{witness} left convexity of $G$. A CNF $F$ is \emph{variable convex} if $\mathsf{inc}(F) = (\mathsf{var}(F),F,E)$ is left convex. 

For an integer $d$, a CNF $F$ has \emph{degree} $d$ if $\mathsf{inc}(F)$ has degree at most $d$.  A class $\mathcal{F}$ of CNFs has \emph{bounded degree} if there exists an integer $d$ such that every CNF in $\mathcal{F}$ has degree $d$.  

\section{Polynomial Time Compilability}\label{sect:positive}

In this section, we introduce the \emph{few subterms} property, 
a sufficient condition for a class of CNFs to have polynomial size compilation into OBDDs (Section~\ref{sect:fewsubterms}).  
We prove that the classes of variable convex CNFs and bounded treewidth CNFs have the few subterms property 
(Section~\ref{sect:convex} and Section~\ref{sect:tw}).  Finally, we establish fixed-parameter tractable size 
and time OBDD compilation results for CNFs, where the parameter is the distance to a few subterms CNF class (Section~\ref{sect:fpt}).  

\subsection{Few Subterms}\label{sect:fewsubterms}

In this section, we introduce a property of classes of CNFs called the \emph{few subterms} property (Definition~\ref{def:fst}), 
and prove that classes of CNFs with the few subterms property admit polynomial time compilation into OBDDs (Corollary~\ref{cor:fewsubtermsptime}).

\begin{definition}[Few Subterms]\label{def:fst}
Let $F$ be a CNF, let $V \subseteq \mathsf{var}(F)$, 
and let $f \colon V \to \{0,1\}$.  The CNF $F[f]$ is called a \emph{$V$-subterm} of $F$.  
The set of $V$-subterms of $F$ is denoted by 
$$\mathsf{st}(F,V)=\{ F[f] \mid f \colon V \to \{0,1\} \}\text{.}$$
Let $\sigma$ be an ordering of $\mathsf{var}(F)$.  
The \emph{subterm width} of $F$ with respect to $\sigma$, in symbols $\mathsf{stw}(F,\sigma)$, is equal to 
$$\mathsf{stw}(F,\sigma)=\max \{ |\mathsf{st}(F,\mathsf{var}(\pi))| \mid \text{$\pi$ prefix of $\sigma$} \}\text{;}$$
the \emph{subterm width} of $F$ is the minimum subterm width of $F$ with respect to $\sigma$, 
where $\sigma$ ranges over all orderings of $\mathsf{var}(F)$.

Let $\mathcal{F}$ be a class of CNFs.  
A function $b \colon \mathbb{N} \to \mathbb{N}$ is called a \emph{subterm bound} of $\mathcal{F}$ 
if for all $F \in \mathcal{F}$, the subterm width of $F$ is bounded above by $b(\mathsf{size}(F))$.
Let $b \colon \mathbb{N} \to \mathbb{N}$ be a subterm bound of $\mathcal{F}$, 
let $F \in \mathcal{F}$, and let $\sigma$ be an ordering of $\mathsf{var}(F)$.  
We call $\sigma$ a \emph{witness} of subterm bound $b$ with respect to $F$ 
if $\mathsf{stw}(F,\sigma)\leq b(\mathsf{size}(F))$.
The class $\mathcal{F}$ has \emph{few subterms} if it has a polynomial subterm bound $p \colon \mathbb{N} \to \mathbb{N}$; 
if, in addition, for all $F \in \mathcal{F}$, 
an ordering $\sigma$ of $\mathsf{var}(F)$ witnessing $p$ with respect to $F$ 
can be computed in polynomial time, $\mathcal{F}$ is said to have \emph{constructive few subterms}.
\end{definition}

The following statement describes how the few subterms property 
naturally presents itself as a sufficient condition for a polynomial 
size construction of OBDDs from CNFs.  

\begin{theorem}\label{th:psizeimpliesptime}
  There exists 
  an algorithm that, given a CNF $F$ and an ordering $\sigma$ of
  $\mathsf{var}(F)$, returns a $\sigma$\hy OBDD for $F$ of size at most
  $|\mathsf{var}(F)|\: \mathsf{stw}(F, \sigma)$ in time polynomial in
  $|\mathsf{var}(F)|$ and $\mathsf{stw}(F, \sigma)$.
\end{theorem}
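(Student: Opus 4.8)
The plan is to construct the canonical \emph{layered} $\sigma$-OBDD whose nodes are exactly the distinct subterms of $F$ obtained by restricting prefixes of $\sigma$, and then to read the size bound directly off the definition of subterm width. Write $\sigma = x_{i_1} < \cdots < x_{i_n}$ with $n = |\mathsf{var}(F)|$, and for $0 \le j \le n$ let $\pi_j$ denote the length-$j$ prefix (so $\mathsf{var}(\pi_0) = \emptyset$). For each $j$ with $0 \le j \le n-1$ I would create one node for each non-terminal CNF $G \in \mathsf{st}(F,\mathsf{var}(\pi_j))$, label it with the next variable $x_{i_{j+1}}$, and direct its $0$-edge and $1$-edge to the nodes for $G[\{x_{i_{j+1}} \mapsto 0\}]$ and $G[\{x_{i_{j+1}} \mapsto 1\}]$ at level $j+1$. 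The empty CNF $\emptyset$ is identified with the single $1$-sink and every CNF containing the empty clause with the single $0$-sink, these identifications being made across all levels. A node is thus a pair (level, CNF), and \emph{crucially} distinct CNFs are kept apart by \emph{syntactic} equality, so no semantic equivalence test enters the construction; this is exactly the quantity that $\mathsf{st}$ counts.

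Three facts drive the argument. First, restriction composes: for an assignment $f$ and a variable $x \notin \mathsf{var}(f)$ one has $(F[f])[\{x \mapsto b\}] = F[f \cup \{x \mapsto b\}]$, which follows directly from the set-based definition of $F[\cdot]$ and guarantees that following the bits $b_{i_1}, \ldots, b_{i_j}$ from the source lands in the node $F[f_j]$, where $f_j = \{x_{i_k} \mapsto b_{i_k} : k \le j\}$. Second, for a total assignment $f$ the CNF $F[f]$ equals $\emptyset$ iff $f$ satisfies $F$ and otherwise contains the empty clause; combined with the composition identity this shows that on input $(b_1,\ldots,b_n)$ the sink reached carries the label $F(b_1,\ldots,b_n)$, which is correctness. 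Third, every edge runs from a node labelled $x_{i_{j+1}}$ to a node labelled $x_{i_{j+2}}$ or to a sink, so the diagram respects $\sigma$ and is a genuine $\sigma$-OBDD with the single source $F$ (at level $0$) and at most two sinks.

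For the size bound I would observe that the nodes labelled $x_{i_{j+1}}$ form a subset of $\mathsf{st}(F,\mathsf{var}(\pi_j))$ and hence number at most $|\mathsf{st}(F,\mathsf{var}(\pi_j))| \le \mathsf{stw}(F,\sigma)$; since each of the $n$ variables labels at most $\mathsf{stw}(F,\sigma)$ nodes, the diagram has at most $|\mathsf{var}(F)| \cdot \mathsf{stw}(F,\sigma)$ nodes (up to the two constant sinks). For the running time I would build the diagram level by level: maintain the list of distinct subterms at level $j$ (at most $\mathsf{stw}(F,\sigma)$ of them), compute the two restrictions of each in time linear in $\mathsf{size}(F)$ (each residual CNF has size at most $\mathsf{size}(F)$), and deduplicate the at most $2\,\mathsf{stw}(F,\sigma)$ resulting CNFs by direct comparison before moving to level $j+1$. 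This is polynomial in $|\mathsf{var}(F)|$, $\mathsf{stw}(F,\sigma)$, and the input length $\mathsf{size}(F)$.

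I expect the main obstacle to be correctness bookkeeping rather than any single deep step. One must verify the composition identity carefully against the set-based definition of $F[f]$, including the harmless collapsing of distinct clauses of $F$ to equal restricted clauses, and one must check that early termination at $\emptyset$ or at an empty clause is routed consistently with the sink semantics. The one conceptual point to keep in view is that nodes are merged by syntactic CNF equality, not by the Boolean functions they compute: this is what keeps the per-level count bounded by $\mathsf{stw}(F,\sigma)$ and, equally importantly, keeps the running time polynomial by avoiding any equivalence testing.
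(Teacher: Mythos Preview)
Your proposal is correct and follows essentially the same approach as the paper: build the $\sigma$-OBDD level by level, with one node per (syntactically distinct, non-terminal) subterm $F[f]$ at each level, route edges via single-variable restriction, and bound each level by $\mathsf{stw}(F,\sigma)$. Your treatment is in fact slightly more careful than the paper's in spelling out the composition identity, the sink semantics, and the role of syntactic (rather than semantic) equality in the merge step; you also rightly note that the running time depends on $\mathsf{size}(F)$ as well, a dependence the paper leaves implicit.
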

\begin{proof}[Proof of Theorem~\ref{th:psizeimpliesptime}]
  Let $F$ be a CNF and $\sigma=x_1\cdots x_n$ be an ordering of
  $\mathsf{var}(F)$.  The algorithm computes a $\sigma$-OBDD $D$ for $F$ as
  follows.

  At step $i=1$, create the source of $D$, labelled by $F$, at the level $0$
  of the diagram; if $\emptyset \in F$ (respectively, $F = \emptyset$), then
  identify the source with the $0$-sink (respectively, $1$-sink) of the
  diagram, otherwise make the source an $x_1$-node.

  At step $i+1$ for $i=1,\ldots,n-1$, let $v_1,\ldots,v_l$ be the
  $x_{i}$-nodes at level $i-1$ of the diagram, respectively labelled
  $F_1,\ldots,F_l$.  For $j=1,\ldots,l$ and $b=0,1$, compute $F_j[x_{i}=b]$,
  where $x_i=b$ denotes the assignment $f \colon \{x_i\} \to \{0,1\}$ mapping
  $x_i$ to $b$.  If $F_j[x_{i}=b]$ is equal to some label of an $x_{i+1}$-node
  $v$ already created at level $i$, then direct the $b$-edge leaving the
  $x_{i}$-node labelled $F_j$ to $v$; otherwise, create a new $x_{i+1}$-node
  $v$ at level $i$, labelled $F_j[x_{i}=b]$, and direct the $b$-edge leaving
  the $x_{i}$-node labelled $F_j$ to $v$.  If $\emptyset \in F_j[x_{i}=b]$,
  then identify $v$ with the $0$-sink of $D$, and if $\emptyset=F_j[x_{i}=b]$,
  then identify $v$ with the $1$-sink of $D$.

  At termination, the diagram obtained computes $F$ and respects $\sigma$.  We
  analyze the runtime.  At step $i+1$ ($0 \leq i<n$), the nodes created at
  level $i$ are labelled by CNFs of the form $F[f]$, where $f$ ranges over all
  assignments of $\{x_1,\ldots,x_i\}$ not falsifying $F$; that is, these nodes
  correspond exactly to the $\{x_1,\ldots,x_i\}$-subterms
  $\mathsf{st}(F,\{x_1,\ldots,x_i\})$ of $F$ not containing the empty clause,
  whose number is bounded above by $\mathsf{stw}(F, \sigma)$. As level $i$ is
  processed in time bounded above by its size times the size of level $i-1$,
  and $|\mathsf{var}(F)|$ levels are processed, the diagram $D$ 
has size at most
  $|\mathsf{var}(F)| \cdot \mathsf{stw}(F, \sigma)$ and is constructed
  in time bounded above by a polynomial in $|\mathsf{var}(F)|$ and
  $\mathsf{stw}(F, \sigma)$.
\end{proof}

\begin{corollary}\label{cor:fewsubtermsptime}
Let $\mathcal{F}$ be a class of CNFs with constructive few subterms.  
Then $\mathcal{F}$ has has polynomial time compilation into OBDDs.
\end{corollary}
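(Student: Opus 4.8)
The plan is to combine Theorem~\ref{th:psizeimpliesptime} with the constructive few subterms property in the most direct way. Since $\mathcal{F}$ has constructive few subterms, there is a polynomial $p \colon \mathbb{N} \to \mathbb{N}$ that is a subterm bound for $\mathcal{F}$, and for every $F \in \mathcal{F}$ an ordering $\sigma$ of $\mathsf{var}(F)$ witnessing $p$ (that is, with $\mathsf{stw}(F,\sigma) \leq p(\mathsf{size}(F))$) can be computed in time polynomial in $\mathsf{size}(F)$. The compilation algorithm is then the obvious composition: given $F \in \mathcal{F}$, first compute such a witnessing ordering $\sigma$, then feed $F$ and $\sigma$ into the algorithm of Theorem~\ref{th:psizeimpliesptime} to obtain a $\sigma$-OBDD for $F$.

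First I would invoke the constructive hypothesis to produce $\sigma$ in polynomial time, recording that $\mathsf{stw}(F,\sigma) \leq p(\mathsf{size}(F))$. Next I would apply Theorem~\ref{th:psizeimpliesptime} to $F$ and $\sigma$, which returns an OBDD computing the same Boolean function as $F$ in time polynomial in $|\mathsf{var}(F)|$ and $\mathsf{stw}(F,\sigma)$. The final step is to bound this running time by a polynomial in $\mathsf{size}(F)$ alone: we have $|\mathsf{var}(F)| \leq \mathsf{size}(F)$ trivially, and $\mathsf{stw}(F,\sigma) \leq p(\mathsf{size}(F))$ by the choice of $\sigma$, so the time bound of Theorem~\ref{th:psizeimpliesptime} becomes a polynomial in $\mathsf{size}(F)$. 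Since the ordering is computed in polynomial time as well, the total running time is polynomial in $\mathsf{size}(F)$, which is exactly what polynomial-time compilation into OBDDs requires.

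There is no serious obstacle here: the corollary is essentially a bookkeeping consequence of the preceding theorem, and the only point that needs a word of care is confirming that the two polynomial bounds compose into a single polynomial in $\mathsf{size}(F)$. The mild subtlety worth stating explicitly is that Theorem~\ref{th:psizeimpliesptime} phrases its guarantee in terms of $\mathsf{stw}(F,\sigma)$ for the \emph{given} ordering, whereas the few subterms property is stated via the minimum over all orderings; the constructive version bridges this precisely by furnishing, in polynomial time, a concrete $\sigma$ attaining a bound of $p(\mathsf{size}(F))$, so no separate minimization is needed. I would therefore keep the proof to a couple of sentences, citing the constructive few subterms definition for the ordering and Theorem~\ref{th:psizeimpliesptime} for the construction and its time bound.
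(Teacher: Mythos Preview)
Your proposal is correct and follows essentially the same approach as the paper: compute a witnessing ordering $\sigma$ in polynomial time using the constructive few subterms hypothesis, invoke Theorem~\ref{th:psizeimpliesptime}, and observe that its running time, being polynomial in $|\mathsf{var}(F)|$ and $\mathsf{stw}(F,\sigma)\leq p(\mathsf{size}(F))$, is polynomial in $\mathsf{size}(F)$. Your additional remarks (the explicit bound $|\mathsf{var}(F)|\leq\mathsf{size}(F)$ and the comment on why the constructive hypothesis bridges the gap between the minimum subterm width and the width for the specific $\sigma$) are helpful clarifications but not departures from the paper's argument.
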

\begin{proof}[Proof of Corollary~\ref{cor:fewsubtermsptime}]
  Let $\mathcal{F}$ be a class of CNFs with constructive few subterms, and let
  $p \colon \mathbb{N} \to \mathbb{N}$ be a polynomial subterm bound of
  $\mathcal{F}$. The algorithm, given a CNF $F$, computes in polynomial time
  an ordering of $\mathsf{var}(F)$ witnessing $p$ with respect to $F$, and
  invokes the algorithm in Theorem~\ref{th:psizeimpliesptime}, which runs in
  time polynomial in $|\mathsf{var}(F)|$ and $\mathsf{stw}(F, \sigma)$. Since
  $\mathsf{stw}(F, \sigma) \leq p(\mathsf{size}(F))$ the overall runtime is polynomial in
  $\mathsf{size}(F)$.
\end{proof}

\subsection{Variable Convex}\label{sect:convex}

In this section, we prove that the class of variable convex CNFs has the constructive few subterms property
(Lemma~\ref{lemma:convexsmallproj}), and hence admits polynomial time
compilation into OBDDs (Theorem~\ref{th:leftconvex}); as a special case, CNFs
whose incidence graphs are cographs admit polynomial time compilation into
OBDDs (Example~\ref{ex:cographs}).

\begin{lemma}\label{lemma:convexsmallproj}
  The class $\mathcal{F}$ of variable convex CNFs has the constructive few
  subterms property.
\begin{proof}
  Let $F \in \mathcal{F}$, so that $\mathsf{inc}(F)$ is left convex, and let 
  $\sigma$ be an ordering of $\mathsf{var}(F)$ witnessing the left convexity 
  of $\mathsf{inc}(F)$.
Let $\pi$ be any prefix of $\sigma$.  Call a clause $c \in F$ \emph{$\mathsf{var}(\pi)$-active} in $F$ if 
$\mathsf{var}(c) \cap \mathsf{var}(\pi) \neq \emptyset$ and $\mathsf{var}(c) \cap (\mathsf{var}(F)\setminus \mathsf{var}(\pi)) \neq \emptyset$.  
Let $\mathsf{ac}(F,\mathsf{var}(\pi))$ denote the CNF containing the $\mathsf{var}(\pi)$-active clauses of $F$.  
For all $c \in \mathsf{ac}(F,\mathsf{var}(\pi))$, let $\mathsf{var}(c)'=\mathsf{var}(c) \cap \mathsf{var}(\pi)$.  

\begin{claim}
Let $c,c' \in \mathsf{ac}(F,\mathsf{var}(\pi))$.  Then, $\mathsf{var}(c)' \subseteq \mathsf{var}(c')'$ 
or $\mathsf{var}(c')' \subseteq \mathsf{var}(c)'$.
\begin{proof}[Proof of Claim]
Let $c,c' \in \mathsf{ac}(F,\mathsf{var}(\pi))$.  Assume for a contradiction that the statement does not hold, 
that is, there exist variables $v,v' \in \mathsf{var}(\pi)$, $v \neq v'$, 
such that $v \in \mathsf{var}(c)'\setminus \mathsf{var}(c')'$ and $v' \in \mathsf{var}(c')'\setminus \mathsf{var}(c)'$.  
Assume that $\sigma(v)<\sigma(v')$; the other case is symmetric.  Since $c \in \mathsf{ac}(F,\mathsf{var}(\pi))$, 
by definition there exists a variable $w \in \mathsf{var}(F)\setminus \mathsf{var}(\pi)$ such that $w \in \mathsf{var}(c)$.  
It follows that $\sigma(v')<\sigma(w)$.  Therefore, 
we have $\sigma(v)<\sigma(v')<\sigma(w)$, 
where $v,w \in \mathsf{var}(c)$ and $v' \not\in \mathsf{var}(c)$, contradicting the fact that $\sigma$ witnesses the left convexity of $\mathsf{inc}(F)$.
\end{proof}
\end{claim}

We now introduce a partially ordered set $P$, representing the entailment relation 
among $\mathsf{var}(\pi)$-active clauses restricted to literals on variables in $\mathsf{var}(\pi)$.  Formally, 
we define $P$ as follows:
\begin{itemize}
\item The elements are equivalence classes $[c]_\equiv$ of the equivalence relation 
on $\mathsf{ac}(F,\mathsf{var}(\pi))$ defined as follows.  For all $c,c' \in \mathsf{ac}(F,\mathsf{var}(\pi))$, $c \equiv c'$ if and only if, for all $v \in \mathsf{var}(\pi)$,  
$$\{v,\neg v\} \cap c=\{v,\neg v\} \cap c'\text{;}$$ 
in words, literals on $v$ occur identically in $c$ and $c'$.
\item The order is defined by putting, for all elements $[c]_\equiv$ and $[c']_\equiv$, 
$$\text{$[c]_\equiv \leq [c']_\equiv$ if and only if $\{ l \in c \mid \mathsf{var}(l) \in \mathsf{var}(\pi)\} \subseteq \{ l \in c' \mid \mathsf{var}(l) \in \mathsf{var}(\pi)\}$;}$$
in words, upon restriction to literals on variables in $\mathsf{var}(\pi)$, every clause in $[c]_\equiv$ entails every clause in $[c']_\equiv$.
\end{itemize}
Observe that $|P| \leq |F|$, because $|P| \leq |\mathsf{ac}(F,\mathsf{var}(\pi))|$ and $\mathsf{ac}(F,\mathsf{var}(\pi)) \subseteq F$.

We now establish a correspondence between the $\mathsf{var}(\pi)$-subterms of $F$ 
and the elements in $P$, which allows to bound above the size of $\mathsf{st}(F,\mathsf{var}(\pi))$ by the size of $P$.  

\begin{claim}
Let $f \colon \mathsf{var}(\pi) \to \{0,1\}$ be an assignment not satisfying $\mathsf{ac}(F,\mathsf{var}(\pi))$. 
\begin{itemize}
\item There exists $c \in \mathsf{ac}(F,\mathsf{var}(\pi))$ such that $[c]_\equiv$ is maximal in $P$ 
with the property that $f$ does not satisfy $c$. 
\item Let $t \in \mathsf{ac}(F,\mathsf{var}(\pi))$ be such that $[t]_\equiv$ is maximal in $P$ 
with the property that $f$ does not satisfy $t$.  
Then, 
$$\mathsf{ac}(F,\mathsf{var}(\pi))[f]=\{ c \in [s]_\equiv \mid [s]_\equiv \leq [t]_\equiv \}[f]\text{.}$$
\end{itemize}
\begin{proof}[Proof of Claim]
For the first part, let $f \colon \mathsf{var}(\pi) \to \{0,1\}$ be an assignment not satisfying $\mathsf{ac}(F,\mathsf{var}(\pi))$. 
By the first claim, there is a unique inclusion maximal clause $c$ among the clauses in $\mathsf{ac}(F,\mathsf{var}(\pi))$ not satisfied by $f$.  
If $[c]_\equiv$ is maximal in $P$, then we are done.  Otherwise, 
assume that $[c]_\equiv$ is not maximal in $P$, 
and assume for a contradiction that there exists $[d]_\equiv$ 
such that $[c]_\equiv < [d]_\equiv$ and $f$ does not satisfy $d$.  
Since $[c]_\equiv < [d]_\equiv$, it holds that $d$ contains 
at least one literal $l$, on a variable in $\mathsf{var}(\pi)$, such that $l$ is not in $c$; 
a contradiction, since $c$ is chosen inclusion maximal among the clauses in $\mathsf{ac}(F,\mathsf{var}(\pi))$ not satisfied by $f$.

For the second part, let $t \in \mathsf{ac}(F,\mathsf{var}(\pi))$ be such $[t]_\equiv$ is maximal in $P$ 
with the property that $f$ does not satisfy $t$.  By definition, 
if $c \in [s]_\equiv$ and $[s]_\equiv \leq [t]_\equiv$, 
then $c$ entails $t$ upon restriction to variables in $\mathsf{var}(\pi)$.  Hence, 
if $f$ does not satisfy $t$, it holds that $f$ does not satisfy $c$.  Hence, 
$\mathsf{ac}(F,\mathsf{var}(\pi))[f]$ is equal to the clauses in $\{ c \in [s]_\equiv \mid [s]_\equiv \leq [t]_\equiv \}$, 
restricted to variables not in $\mathsf{var}(\pi)$, which is exactly the effect of applying $f$ to 
$\{ c \in [s]_\equiv \mid [s]_\equiv \leq [t]_\equiv \}$.

The claim is settled.
\end{proof}
\end{claim}

Let $f \colon \mathsf{var}(\pi) \to \{0,1\}$ be any assignment.  
If $f$ does not satisfy $\mathsf{ac}(F,\mathsf{var}(\pi))$, 
then, by the second claim, $\mathsf{ac}(F,\mathsf{var}(\pi))[f]$ corresponds to an element in $P$.  
Hence, the number of $\mathsf{var}(\pi)$-subterms of $F$ generated by assignments not satisfying $\mathsf{ac}(F,\mathsf{var}(\pi))$ 
is bounded above by the number of elements in $P$; we observed that $|P| \leq |F|$.  Otherwise, if $f$ satisfies $\mathsf{ac}(F,\mathsf{var}(\pi))$, 
then $\mathsf{ac}(F,\mathsf{var}(\pi))[f]=\emptyset$.  It follows that 
$$|\mathsf{st}(\mathsf{ac}(F,\mathsf{var}(\pi)))| \leq |F|+1 \leq \mathsf{size}(F)+1\text{.}$$

Note that $F$ is the disjoint union of $\mathsf{ac}(F,\mathsf{var}(\pi))$, 
clauses $C' \subseteq F$ whose variables are all in $\mathsf{var}(\pi)$, 
and clauses $C'' \subseteq F$ whose variables are all outside $\mathsf{var}(\pi)$.  
Also, $\mathsf{st}(C',\mathsf{var}(\pi)) \subseteq \{\emptyset,\{\emptyset\}\}$, 
and $\mathsf{st}(C'',\mathsf{var}(\pi))=\{C''\}$.  It follows that 
$$|\mathsf{st}(F,\mathsf{var}(\pi))| \leq |\mathsf{st}(\mathsf{ac}(F,\mathsf{var}(\pi)))| \cdot |\mathsf{st}(C',\mathsf{var}(\pi))| \cdot |\mathsf{st}(C'',\mathsf{var}(\pi))| \leq 2(\mathsf{size}(F)+1)\text{.}$$

This shows that $\mathsf{stw}(F, \sigma)$ is linear in the size of $F$, where
$\sigma$ is an ordering witnessing left convexity of $\mathsf{inc}(F)$. This
proves that the class of variable convex CNFs has few subterms. Moreover, an
ordering witnessing the left convexity of $\mathsf{inc}(F)$ can be computed in
polynomial (even linear) time~\cite{BL76,KKLV11}, so the class of variable convex CNFs even has
the constructive few subterms property.
\end{proof}
\end{lemma}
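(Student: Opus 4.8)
The plan is to exhibit a single ordering $\sigma$ of $\mathsf{var}(F)$ — namely, any ordering witnessing left convexity of $\mathsf{inc}(F)$ — for which the subterm width $\mathsf{stw}(F,\sigma)$ is linear in $\mathsf{size}(F)$, and to observe that such an ordering is computable in polynomial time; by Definition~\ref{def:fst} this yields the constructive few subterms property. The first thing I would do is unwind what left convexity buys us: in the witnessing ordering $\sigma$ the variables of every clause $c \in F$ occupy a contiguous block (an interval) of $\sigma$. Fixing any prefix $\pi$ of $\sigma$ and writing $P = \mathsf{var}(\pi)$ for the corresponding prefix of the variable ordering, I would first bound $|\mathsf{st}(F,P)|$; maximizing over $\pi$ then bounds $\mathsf{stw}(F,\sigma)$.

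Next I would split $F$ according to how each clause meets $P$: clauses whose variables lie entirely inside $P$, clauses whose variables lie entirely outside $P$, and the \emph{active} clauses that have variables on both sides. Since restriction acts independently on these three sub-CNFs (they share no clauses), $|\mathsf{st}(F,P)|$ factors as the product of the three subterm counts, so it suffices to bound each. A clause contained in $P$ contributes only the empty CNF or the CNF $\{\emptyset\}$ after restriction (it is satisfied or falsified), and a clause disjoint from $P$ is untouched, so these two families together contribute only a constant factor; the whole difficulty is concentrated in the active clauses.

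The heart of the argument is a nesting property forced by convexity. For an active clause $c$, its $P$-part sits at the left end of its interval and runs up to the boundary of $P$, precisely because $P$ is a prefix of $\sigma$ and $c$ also reaches beyond $P$; hence the $P$-parts of any two active clauses are intervals sharing that right boundary, so one contains the other. In other words, \emph{the restrictions of the active clauses to $P$ form a chain under inclusion}. I would then promote this to the literal level and argue that, for any assignment $f\colon P \to \{0,1\}$ failing to satisfy all active clauses, there is a unique inclusion-maximal active clause $t$ left unsatisfied, and that the active clauses falsified by $f$ are exactly those whose $P$-restriction entails that of $t$. Consequently the restriction of the active clauses under $f$ is completely determined by the choice of $t$ among the active clauses (with one extra possibility when $f$ satisfies them all), giving at most $|F|+1$ distinct subterms from the active part. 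Multiplying the three factors then yields $\mathsf{stw}(F,\sigma) \le 2(\mathsf{size}(F)+1)$, which is linear.

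The step I expect to be the main obstacle is making this chain/threshold argument precise: establishing that a single maximal unsatisfied clause $t$ exists and that it determines the entire restricted active CNF. This is exactly where convexity is indispensable — without the interval structure the $P$-parts need not be nested, and the number of distinct restrictions could blow up combinatorially. Everything else is bookkeeping: the factoring over clause types is routine, and constructivity is immediate, since a convexity-witnessing ordering of the variables can be produced in linear time by standard consecutive-ones / PQ-tree algorithms~\cite{BL76,KKLV11}, so the witnessing $\sigma$ is available within the required time bound.
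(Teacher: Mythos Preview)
Your proposal is correct and follows essentially the same approach as the paper: the same convexity-witnessing ordering, the same three-way split into clauses inside, outside, and active across the prefix boundary, the same nesting/chain property for the $P$-parts of active clauses, the same threshold argument that a maximal unsatisfied active clause determines the restricted active CNF (the paper packages this via a poset $P$ of equivalence classes, but the content is identical), and the same final bound $\mathsf{stw}(F,\sigma)\le 2(\mathsf{size}(F)+1)$ together with the PQ-tree constructivity remark.
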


\begin{theorem}\label{th:leftconvex}
  The class of variable convex CNF formulas has polynomial time compilation
  into OBDDs.
\begin{proof}
  Immediate from Corollary~\ref{cor:fewsubtermsptime} and
  Lemma~\ref{lemma:convexsmallproj}.\end{proof}
\end{theorem}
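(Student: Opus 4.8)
The plan is to obtain Theorem~\ref{th:leftconvex} as an instance of the general few subterms framework, rather than by constructing OBDDs for variable convex CNFs directly. Corollary~\ref{cor:fewsubtermsptime} already reduces polynomial time compilation into OBDDs to establishing the \emph{constructive few subterms} property, so the entire content of the theorem is the assertion that the class of variable convex CNFs enjoys this property. Thus my proof would consist of two moves: first, show that variable convex CNFs have constructive few subterms (this is exactly Lemma~\ref{lemma:convexsmallproj}); second, invoke Corollary~\ref{cor:fewsubtermsptime} to conclude.

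For the first move I would take the ordering $\sigma$ of $\mathsf{var}(F)$ that witnesses left convexity of $\mathsf{inc}(F)$ as the candidate witness of a polynomial subterm bound, and try to bound $|\mathsf{st}(F,\mathsf{var}(\pi))|$ polynomially for every prefix $\pi$ of $\sigma$. The key observation is that only the \emph{active} clauses---those having a variable inside $\mathsf{var}(\pi)$ and a variable outside $\mathsf{var}(\pi)$---can give rise to nontrivial distinct restrictions; clauses lying entirely inside or entirely outside $\mathsf{var}(\pi)$ contribute only a constant factor. I would then argue that left convexity forces the prefix-parts $\mathsf{var}(c)\cap\mathsf{var}(\pi)$ of the active clauses to form a chain under inclusion, so that the distinct restrictions of the active clauses are indexed by an entailment poset $P$ of size at most $|F|$. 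This yields a subterm width that is linear in $\mathsf{size}(F)$, hence a polynomial (indeed linear) subterm bound.

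The main obstacle is precisely this chain structure of the active clauses: without convexity the number of distinct prefix-restrictions could blow up exponentially, and it is the convexity ordering that collapses them. Once the nesting is in hand, bounding $|\mathsf{st}(F,\mathsf{var}(\pi))|$ by the size of the poset $P$ and checking that each active restriction is determined by a single maximal unsatisfied active clause is routine bookkeeping. Finally, since the witnessing ordering is computable in linear time by the known convexity-recognition algorithms, the property is \emph{constructive}, and Corollary~\ref{cor:fewsubtermsptime} upgrades the size bound to polynomial time compilation, completing the proof.
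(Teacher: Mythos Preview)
Your proposal is correct and follows exactly the paper's approach: the theorem is obtained by combining Lemma~\ref{lemma:convexsmallproj} (variable convex CNFs have constructive few subterms, via the convexity ordering, the chain structure of the active clauses, and linear-time recognition) with Corollary~\ref{cor:fewsubtermsptime}. Your sketch of the lemma's proof also matches the paper's argument essentially step for step.
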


\begin{example}[Bipartite Cographs]\label{ex:cographs}
Let $F$ be a CNF such that $\mathsf{inc}(F)$ is a cograph.  Note 
that $\mathsf{inc}(F)$ is a complete bipartite graph.  Indeed, cographs 
are characterized as graphs of clique-width at most $2$ \cite{CO00}, 
and it is readily verified that if a bipartite graph has clique\hy width at most $2$, 
then it is a complete bipartite graph.  
A complete bipartite graph is trivially left convex.  
Then Theorem~\ref{th:leftconvex} implies that CNFs whose incidence graphs are cographs 
have polynomial time compilation into OBDDs.  
\end{example}

\subsection{Bounded Treewidth}\label{sect:tw}

In this section, we prove that if a class of CNFs has \emph{bounded
  treewidth}, then it has the constructive few subterms property
(Lemma~\ref{lemma:twfewsubterms}), and hence admits polynomial time
compilation into OBDDs (Theorem~\ref{th:twpolytime}).  

\medskip \noindent A \emph{tree decomposition} $T$ of a graph $G$ 
is a rooted tree whose elements, called \emph{bags}, 
are subsets of the vertices of $G$ satisfying the following:
\begin{itemize}
\item for every vertex $v$ of $G$, there is a bag containing $v$;
\item for every edge $vw$ of $G$, there is a bag containing $v$ and $w$;
\item for every three bags $B, B',B'' \in P$, if $B \leq B' \leq B''$, 
then $B \cap B'' \subseteq B'$.
\end{itemize}
A graph $G$ has \emph{treewidth} $k$ if it has a tree decomposition $T$ 
such that each bag contains at most $k+1$ vertices; $T$ is said to \emph{witness} treewidth $k$ for $G$.  
The notions of \emph{path decomposition} and \emph{pathwidth} are defined analogously 
using paths instead of trees.  

Let $F$ be a CNF.  We say that $\mathsf{inc}(F)=(\mathsf{var}(F),F,E)$ has treewidth (respectively, pathwidth) $k$ 
if the graph $(\mathsf{var}(F) \cup F,E)$ has treewidth (respectively, pathwidth) $k$.  
We identify the pathwidth (respectively, treewidth) of a CNF 
with the pathwidth (respectively, treewidth) of its incidence graph.  If $\mathsf{inc}(F)$ has pathwidth $k$, 
then an ordering $\sigma$ of $\mathsf{var}(F)$ is called a \emph{forget} ordering for $F$ if, 
with respect to an arbitrary linearization of some path decomposition witnessing pathwidth $k$ for $\mathsf{inc}(F)$, 
if the first bag containing $v$ is less than or equal to the first bag containing $v'$, 
then $\sigma(v)<\sigma(v')$. 

A proof of the following lemma already appears, in essence, in previous work by 
Ferrara, Pan, and Vardi \cite[Theorem~2.1]{FPV05} and Razgon \cite[Lemma~5]{R14}.

\begin{lemma}\label{lemma:activebounded}
  Let $F$ be a CNF of 
  pathwidth $k-1$, and let $\sigma$ be a forget ordering for~$F$.  Then
  $\mathsf{stw}(F, \sigma) \leq 2^{k+1}$.
\begin{proof}
  Let $F$ be a CNF such that $\mathsf{inc}(F)$ has pathwidth $k-1$, let
  $\sigma$ be a forget ordering for~$F$, and let $\pi$ be any prefix of
  $\sigma$.

  Let $v$ be the last variable in $\mathsf{var}(\pi)$ relative to the ordering
  $\sigma$, and let $B$ be the first bag (in the total order of $P$) that
  contains $v$.  A clause $c \in F$ is called
  \emph{$\mathsf{var}(\pi)$-active} in $F$ if $\mathsf{var}(c) \cap
  \mathsf{var}(\pi) \neq \emptyset$ and $\mathsf{var}(c) \cap
  (\mathsf{var}(F)\setminus \mathsf{var}(\pi)) \neq
  \emptyset$.  
  Let $\mathsf{ac}(F,\mathsf{var}(\pi))$ denote the CNF containing the
  $\mathsf{var}(\pi)$-active clauses of $F$.  Let
\begin{align*}
C' &= \mathsf{ac}(F,\mathsf{var}(\pi)) \cap B\text{,}\\
C'' &= \{ c \in \mathsf{ac}(F,\mathsf{var}(\pi)) \mid \text{$c \in B'$ only if $B'>B$ in $P$} \}\text{;}
\end{align*}
in words, $C'$ contains $\mathsf{var}(\pi)$-active clauses in the bag $B$, 
and $C''$ contains $\mathsf{var}(\pi)$-active clauses occurring only in bags strictly larger than $B$ in the total order of $P$.  
Clearly, $C' \cap C''=\emptyset$.

\begin{claim}
$\mathsf{ac}(F,\mathsf{var}(\pi))=C' \cup C''$. 
\begin{proof}[Proof of Claim]
First observe that a $\mathsf{var}(\pi)$-active clause $c$ cannot occur only in bags strictly smaller than $B$ in the total order of $P$.  
For otherwise, since $\mathsf{var}(c) \cap (\mathsf{var}(F) \setminus \mathsf{var}(\pi)) \neq \emptyset$, let $v' \in \mathsf{var}(c) \cap (\mathsf{var}(F) \setminus \mathsf{var}(\pi))$; 
if $B'$ is the first bag that contains $v'$, 
then $B \leq B'$ (by the choice of $v$), hence $v'$ is not contained in any bag strictly smaller than $B$, 
and the edge $cv'$ is not witnessed in $P$, a contradiction.  

Thus $\mathsf{var}(\pi)$-active clauses either occur in $B$ (including the case where 
they occur in $B$ and in bags smaller or larger than $B$ in $P$), 
or occur only in bags strictly larger than $B$ in $P$.  Thus, $\mathsf{ac}(F,\mathsf{var}(\pi)) \subseteq C' \cup C''$; 
the other inclusion holds by definition.
\end{proof}
\end{claim}

The claim and the fact that $C' \cap C''=\emptyset$ imply that 
$$|\mathsf{st}(\mathsf{ac}(F,\mathsf{var}(\pi)))| \leq |\mathsf{st}(C',\mathsf{var}(\pi))| \cdot |\mathsf{st}(C'',\mathsf{var}(\pi))|\text{;}$$
thus, suffices to bound above the size of the two sets on the right 
so that the product of the individual bounds is at most $2^k$.  Let $k'=|C'|$.  Obviously,

\begin{claim}
$|\mathsf{st}(C',\mathsf{var}(\pi))|\leq 2^{k'}$.
\end{claim}

Let $V'=\bigcup_{c \in C''} \mathsf{var}(c) \cap \mathsf{var}(\pi)$ and let $k''=|V'|$.

\begin{claim}
$V' \subseteq B$. 
\begin{proof}[Proof of Claim]
Let $c$ be a $\mathsf{var}(\pi)$-active clause occurring only in bags strictly larger than $B$ in $P$.  
Let $v' \in \mathsf{var}(c) \cap \mathsf{var}(\pi)$.  By the choice of $v$ and the properties of the forget ordering $\sigma$, 
it holds that the first bag containing $v'$ is less than or equal to $B$.  
Since $B$ is the first bag that contains $v$, it holds that $v' \in B$ by the properties of $P$ 
(the edge $cv'$ is witnessed in a bag strictly larger than $B$ in $P$).
\end{proof}
\end{claim}

\begin{claim}
$|\mathsf{st}(C'',\mathsf{var}(\pi))|\leq 2^{k''}$.
\begin{proof}[Proof of Claim]
Define an equivalence relation on $\mathsf{var}(\pi)$-assignments as follows: 
For all $f,f' \colon \mathsf{var}(\pi) \to \{0,1\}$, 
$f \equiv f'$ if and only if, for all $v \in V'$, 
$f(v)=f'(v)$.  Since $|V'|=k''$, the equivalence relation 
has $2^{k''}$ many equivalence classes.  Moreover, 
if $f \equiv f'$, then $C''[f]=C''[f']$, 
because $\mathsf{var}(C'') \subseteq V'$.  The claim follows.
\end{proof}
\end{claim}

Since $C',V' \subseteq B$ and $C' \cap V'=\emptyset$, it holds that $k'+k''=|C'|+|V'| \leq |B| \leq k$.  
Hence, $$|\mathsf{st}(\mathsf{ac}(F,\mathsf{var}(\pi)))| \leq 2^{k'} \cdot 2^{k''}=2^{k'+k''} \leq 2^k\text{.}$$

Note that $F$ is the disjoint union of $\mathsf{ac}(F,\mathsf{var}(\pi))$, 
clauses $D' \subseteq F$ whose variables are all in $\mathsf{var}(\pi)$, 
and clauses $D'' \subseteq F$ whose variables are all outside $\mathsf{var}(\pi)$.  
Also, $\mathsf{st}(D',\mathsf{var}(\pi)) \subseteq \{\emptyset,\{\emptyset\}\}$, 
and $\mathsf{st}(D'',\mathsf{var}(\pi))=\{C''\}$.  It follows that 
$$|\mathsf{st}(F,\mathsf{var}(\pi))| \leq |\mathsf{st}(\mathsf{ac}(F,\mathsf{var}(\pi)))| \cdot |\mathsf{st}(D',\mathsf{var}(\pi))| \cdot |\mathsf{st}(D'',\mathsf{var}(\pi))| \leq 2^{k+1}\text{.}$$

and the statement is proved.  
\end{proof}
\end{lemma}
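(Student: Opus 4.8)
The plan is to fix an arbitrary prefix $\pi$ of $\sigma$ and bound $|\mathsf{st}(F,\mathsf{var}(\pi))|$ by $2^{k+1}$; since this holds for every prefix, it bounds $\mathsf{stw}(F,\sigma)$ as well. Following the template already used for the variable convex case (Lemma~\ref{lemma:convexsmallproj}), I would first partition $F$ into three groups according to how each clause meets $\mathsf{var}(\pi)$: clauses all of whose variables lie in $\mathsf{var}(\pi)$ collapse under any $f$ to either $\emptyset$ or $\{\emptyset\}$ (contributing a factor of at most $2$); clauses all of whose variables lie outside $\mathsf{var}(\pi)$ are untouched (a factor of $1$); and the \emph{$\mathsf{var}(\pi)$-active} clauses $\mathsf{ac}(F,\mathsf{var}(\pi))$ are the only nontrivial contributors. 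Because subterms factor over disjoint clause sets, it suffices to show $|\mathsf{st}(\mathsf{ac}(F,\mathsf{var}(\pi)),\mathsf{var}(\pi))| \leq 2^k$ and multiply by the factor $2$.

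The core of the argument exploits the path decomposition. Let $v$ be the $\sigma$-last variable of $\mathsf{var}(\pi)$ and let $B$ be the first bag containing $v$ in the linearization witnessing pathwidth $k-1$ (so $|B| \leq k$). I would split the active clauses into the set $C'$ of those appearing in $B$ and the set $C''$ of those appearing only in bags strictly after $B$, and prove these two disjoint sets exhaust $\mathsf{ac}(F,\mathsf{var}(\pi))$. The reason no active clause can live entirely in bags before $B$ is that such a clause carries a variable $v' \notin \mathsf{var}(\pi)$ whose first bag is at or after $B$ (by the forget-ordering property), so by the contiguity of a vertex's bag interval the clause must reach a bag at or after $B$ to witness the edge $cv'$.

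For $C'$ I would observe that under any assignment each clause is either satisfied and removed or survives with a fixed restriction to $\mathsf{var}(F)\setminus\mathsf{var}(\pi)$, so a subterm is determined by which subset of $C'$ survives, giving $|\mathsf{st}(C',\mathsf{var}(\pi))| \leq 2^{|C'|}$. For $C''$ I would let $V'=\bigcup_{c\in C''}\mathsf{var}(c)\cap\mathsf{var}(\pi)$ and argue $V'\subseteq B$: any such variable has its first bag at or before $B$ (forget ordering), yet the edge to its late clause is witnessed only after $B$, so contiguity forces the variable into $B$. Since $C''$ sees $\mathsf{var}(\pi)$ only through $V'$, assignments agreeing on $V'$ give identical subterms, whence $|\mathsf{st}(C'',\mathsf{var}(\pi))| \leq 2^{|V'|}$. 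The payoff is that $C'$ (clause-vertices) and $V'$ (variable-vertices) are disjoint subsets of the single bag $B$, so $|C'|+|V'| \leq |B| \leq k$ and $2^{|C'|}\cdot 2^{|V'|} \leq 2^k$; multiplying by the factor $2$ from the fully-inside clauses yields $2^{k+1}$.

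I expect the main obstacle to be the careful bookkeeping with the interval property of path decompositions, which is invoked twice in slightly different guises: once to show active clauses cannot be confined to bags preceding $B$, and once to show that the $\mathsf{var}(\pi)$-variables of late clauses already appear in $B$. Both steps must combine the definition of the forget ordering (agreement of the first-bag order with $\sigma$) with the fact that each vertex of $\mathsf{inc}(F)$ occupies a contiguous run of bags, and getting the inequalities between first bags pointing the right way is where the argument is most delicate.
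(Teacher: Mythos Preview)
Your proposal is correct and follows essentially the same approach as the paper's proof: the same choice of $v$ and $B$, the same split of active clauses into $C'$ and $C''$, the same bounds $2^{|C'|}$ and $2^{|V'|}$ with $C'\cup V'\subseteq B$ disjoint, and the same factor of $2$ from the fully-inside clauses. The only cosmetic difference is that you present the three-way partition of $F$ up front while the paper postpones it to the end.
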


\begin{lemma}\label{lemma:twfewsubterms}
  Let $\mathcal{F}$ be a class of CNFs of bounded treewidth. Then
  $\mathcal{F}$ has the constructive few subterms property.
\begin{proof}
  Let $c-1$ be the treewidth bound of $\mathcal{F}$ and let $F \in
  \mathcal{F}$, so that the treewidth of $\mathsf{inc}(F)$ is at most
  $c-1$. We can compute a width $c-1$ tree decomposition of $\mathsf{inc}(F)$
  in linear time $O(\mathsf{size}(F))$~\cite{BodlaenderKloks96}. From this decomposition, we can
  compute a path decomposition of $\mathsf{inc}(F)$ of width at most $(c-1)
  \cdot \log |\mathsf{var}(F) \cup F| \leq c \cdot \log |\mathsf{var}(F) \cup
  F|-1$~\cite[Corollary~24]{Bodlaender1996Arboretum} and a corresponding
  forget ordering of $\mathsf{var}(F)$ in polynomial time. By
  Lemma~\ref{lemma:activebounded}, the subterm width of $F$ with respect to
  $\sigma$ is at most $2^{c \cdot \log |\mathsf{var}(F) \cup
    F|}=|\mathsf{var}(F) \cup F|^c \leq O(\mathsf{size}(F)^c)$. Thus $\mathcal{F}$ has a
  polynomial subterm bound, and a witnessing ordering $\sigma$ can be computed
  for each $F \in \mathcal{F}$ in polynomial time. We conclude that
  $\mathcal{F}$ has the constructive few subterms property.
\end{proof}
\end{lemma}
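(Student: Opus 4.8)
The plan is to reduce the claim to Lemma~\ref{lemma:activebounded}, which bounds the subterm width of a CNF in terms of its \emph{pathwidth} (via a forget ordering). The crux is that bounded \emph{treewidth} implies bounded pathwidth up to a logarithmic factor in the number of vertices, and that this logarithmic factor, sitting in the \emph{exponent} of the bound $2^{k+1}$ supplied by Lemma~\ref{lemma:activebounded}, translates into only a polynomial blowup of the subterm width.

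Concretely, I would fix the treewidth bound of $\mathcal{F}$, say $c-1$, and take an arbitrary $F \in \mathcal{F}$, so that $\mathsf{inc}(F)$ has treewidth at most $c-1$. First I would compute a width-$(c-1)$ tree decomposition of $\mathsf{inc}(F)$; since $c$ is a constant, this can be done in linear time $O(\mathsf{size}(F))$~\cite{BodlaenderKloks96}. Next I would convert this tree decomposition into a path decomposition using the standard bound that a width-$w$ tree decomposition of an $n$-vertex graph yields a path decomposition of width $O(w \log n)$~\cite{Bodlaender1996Arboretum}. Taking $n = |\mathsf{var}(F) \cup F|$, this produces, in polynomial time, a path decomposition of $\mathsf{inc}(F)$ of width at most $c \cdot \log|\mathsf{var}(F) \cup F| - 1$, from which I would read off a forget ordering $\sigma$ of $\mathsf{var}(F)$.

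Finally I would apply Lemma~\ref{lemma:activebounded} with $k = c \cdot \log|\mathsf{var}(F) \cup F|$ (so that the pathwidth equals $k-1$), obtaining
$$\mathsf{stw}(F,\sigma) \leq 2^{k+1} = 2 \cdot |\mathsf{var}(F) \cup F|^{c} = O(\mathsf{size}(F)^{c})\text{.}$$
This is a polynomial subterm bound, and since both the path decomposition and the forget ordering $\sigma$ are computed in polynomial time, $\sigma$ witnesses the bound constructively; hence $\mathcal{F}$ has the constructive few subterms property.

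The main — and essentially only nontrivial — point is recognizing that the logarithmic overhead of the treewidth-to-pathwidth conversion is exactly what the exponential subterm-width bound can absorb: a quantity of the form $2^{O(\log n)}$ collapses to $n^{O(1)}$, so the exponent $c$ stays constant and the overall bound remains polynomial in $\mathsf{size}(F)$. Bounded pathwidth would make the argument immediate, but for bounded treewidth it is precisely this interplay — an exponential bound composed with a merely logarithmic increase in width — that keeps the subterm width polynomial.
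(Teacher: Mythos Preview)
Your proposal is correct and follows essentially the same approach as the paper: compute a bounded-width tree decomposition, convert it to a path decomposition with a logarithmic overhead in width, extract a forget ordering, and apply Lemma~\ref{lemma:activebounded} so that the exponential bound $2^{k+1}$ collapses to a polynomial in $\mathsf{size}(F)$. The emphasis you place on the interplay between the exponential subterm bound and the logarithmic width increase is exactly the point of the argument.
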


\begin{theorem}\label{th:twpolytime}
Let $\mathcal{F}$ be a class of CNFs of bounded treewidth.  Then, 
$\mathcal{F}$ has polynomial time compilation into OBDDs.
\begin{proof}
  Immediate from Lemma~\ref{lemma:twfewsubterms} and
  Corollary~\ref{cor:fewsubtermsptime}.
\end{proof}
\end{theorem}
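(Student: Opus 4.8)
The plan is to derive the statement immediately by composing two facts already established. By Lemma~\ref{lemma:twfewsubterms}, any class $\mathcal{F}$ of CNFs of bounded treewidth has the constructive few subterms property. By Corollary~\ref{cor:fewsubtermsptime}, any class with the constructive few subterms property admits polynomial time compilation into OBDDs. Chaining these two yields the theorem with no additional work, exactly mirroring the way Theorem~\ref{th:leftconvex} follows from Lemma~\ref{lemma:convexsmallproj} and Corollary~\ref{cor:fewsubtermsptime}.

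To confirm that nothing is lost in the composition, I would check that the polynomial bounds line up. Lemma~\ref{lemma:twfewsubterms} produces, for each $F \in \mathcal{F}$, an ordering $\sigma$ of $\mathsf{var}(F)$ computable in polynomial time with $\mathsf{stw}(F,\sigma) \leq O(\mathsf{size}(F)^c)$, where $c$ depends only on the treewidth bound of $\mathcal{F}$. Feeding this $\sigma$ into the algorithm of Theorem~\ref{th:psizeimpliesptime} produces a $\sigma$-OBDD of size at most $|\mathsf{var}(F)| \cdot \mathsf{stw}(F,\sigma)$ in time polynomial in $|\mathsf{var}(F)|$ and $\mathsf{stw}(F,\sigma)$; since both quantities are polynomial in $\mathsf{size}(F)$, the overall construction runs in polynomial time, as already recorded in the proof of Corollary~\ref{cor:fewsubtermsptime}.

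There is no genuine obstacle at this level, since all of the substance has been absorbed into the preceding results. The real work lies in Lemma~\ref{lemma:activebounded}, which bounds the subterm width of a pathwidth-$(k-1)$ CNF by $2^{k+1}$ under a forget ordering, together with the standard conversion from a bounded-width tree decomposition to a path decomposition of width $O(\log |\mathsf{var}(F) \cup F|)$ used inside Lemma~\ref{lemma:twfewsubterms}; it is precisely this logarithmic blow-up that turns the exponential subterm-width bound into a polynomial one. Once those are in place, the present theorem is a one-line corollary.
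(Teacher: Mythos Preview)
Your proposal is correct and matches the paper's proof exactly: the theorem follows immediately by chaining Lemma~\ref{lemma:twfewsubterms} (bounded treewidth implies constructive few subterms) with Corollary~\ref{cor:fewsubtermsptime} (constructive few subterms implies polynomial time compilation). Your additional remarks unpacking how the polynomial bounds compose and where the real work resides are accurate and helpful, but the paper itself records only the one-line derivation.
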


\subsection{Almost Few Subterms}\label{sect:fpt}

In this section, we use the language of \emph{parameterized complexity} 
to formalize the observation that CNF classes \lq\lq close\rq\rq\ to CNF classes 
with few subterms have \lq\lq small\rq\rq\ OBDD representations~\cite{DowneyFellows13,FlumGrohe}.  

\medskip \noindent   Let $F$ be a CNF and $D$ a set of variables and clauses of~$F$. Let $E$ be 
the formula obtained by deleting $D$ from~$F$, that is, 
$$E=\{ c \setminus \{ l \in c \mid \mathsf{var}(l) \in D \} \mid c \in F \setminus D \}\text{;}$$
we call $D$ the \emph{deletion set} of $F$ with respect to $E$.

The following lemma shows that adding a few variables and clauses 
does not increase the subterm width of a formula too much.

\begin{lemma}\label{lemma:deletion}
  Let $F$ and $E$ be CNFs such that $D$ is the deletion set of~$F$ with
  respect to~$E$. Let~$\pi$ be an ordering of $\mathsf{var}(E)$ and let
  $\sigma$ be an ordering of $\mathsf{var}(F) \cap D$. Then $\mathsf{stw}(F,
  \sigma \pi) \leq 2^k \cdot \mathsf{stw}(E, \pi)$.
\end{lemma}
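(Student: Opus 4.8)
The plan is to bound $|\mathsf{st}(F,\mathsf{var}(\rho))|$ for each prefix $\rho$ of $\sigma\pi$ separately, since $\mathsf{stw}(F,\sigma\pi)$ is by definition the maximum of these quantities over all prefixes. Every prefix of $\sigma\pi$ is either a prefix of $\sigma$ or of the form $\sigma\pi'$ for a prefix $\pi'$ of $\pi$, and I would treat the two cases separately. If $\rho$ is a prefix of $\sigma$, then $\mathsf{var}(\rho)$ consists solely of deleted variables, and $F$ has at most $2^{|\mathsf{var}(\rho)|}\le 2^{|\mathsf{var}(F)\cap D|}$ distinct restrictions by assignments to $\mathsf{var}(\rho)$; since $\mathsf{stw}(E,\pi)\ge 1$, this already lies within the claimed bound (here and below I read $k=|D|=|\mathsf{var}(F)\cap D|+|F\cap D|$).

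The real work is the case $\rho=\sigma\pi'$. Here $\mathsf{var}(\rho)$ is the disjoint union of the deleted variables $\mathsf{var}(F)\cap D$ and $\mathsf{var}(\pi')$, so any assignment $g$ of $\mathsf{var}(\rho)$ factors as $g=g_D\cup h$ with $g_D$ on $\mathsf{var}(F)\cap D$ and $h$ on $\mathsf{var}(\pi')$, giving $F[g]=(F[g_D])[h]$. As there are at most $2^{|\mathsf{var}(F)\cap D|}$ assignments $g_D$, it suffices to bound $|\mathsf{st}(F[g_D],\mathsf{var}(\pi'))|$ by $2^{|F\cap D|}\,\mathsf{stw}(E,\pi)$ for each fixed $g_D$. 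I would split the clauses of $F[g_D]$ into those descending from surviving clauses of $F$ and those descending from the deleted clauses $F\cap D$. Once $g_D$ is fixed, a surviving clause of $F$ is either satisfied through one of its deleted-variable literals, in which case it vanishes, or it is reduced exactly to its image in $E$; hence the surviving part of $F[g_D]$ is a subformula $E_{g_D}\subseteq E$, while the deleted clauses contribute at most $|F\cap D|$ additional clauses. Using the product bound for $\mathsf{st}$ of a disjoint union of clause sets (as in the proof of Lemma~\ref{lemma:activebounded}) and the fact that a set of $|F\cap D|$ clauses has at most $2^{|F\cap D|}$ subterms, this reduces the goal to $|\mathsf{st}(E_{g_D},\mathsf{var}(\pi'))|\le\mathsf{stw}(E,\pi)$. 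A clean sanity check supporting this comparison is that the surviving part of every $F[g]$ is in fact a subset of $E[h]$: a surviving clause left unsatisfied by $g$ has its $E$-image left unsatisfied by $h$, with the same projection onto the variables outside $\mathsf{var}(\rho)$.

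I expect the inequality $|\mathsf{st}(E_{g_D},\mathsf{var}(\pi'))|\le\mathsf{stw}(E,\pi)$ to be the main obstacle. It is not a formal monotonicity of $\mathsf{st}$ under passing to a subformula, because a subformula $E'\subseteq E$ can separate assignments that $E$ collapses; concretely, the subset of $E[h]$ that actually survives in $F[g]$ is governed by which clauses of $E$ are left unsatisfied by $h$ (together with the deleted-variable literals that $F$ attached to them), which is finer information than the subterm $E[h]$ itself records. The crux is therefore to show that, along the ordering $\pi$, the number of distinct unsatisfied-clause patterns that these subformulas can realise is still controlled by the subterm width of $E$, so that ranging over the at most $2^{|\mathsf{var}(F)\cap D|}$ choices of $g_D$ and the $2^{|F\cap D|}$ residues of the deleted clauses multiplies $\mathsf{stw}(E,\pi)$ by no more than $2^{|D|}$. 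Establishing this comparison, via the unsatisfied-clause description of $E'[h]$ and a matching of these patterns to the subterms of $E$ realised along $\pi$, is the step I would expect to require the most care.
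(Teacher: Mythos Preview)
Your decomposition is exactly the paper's: split off the deleted clauses $C=F\cap D$ with $|\mathsf{st}(C,X)|\le 2^{|C|}$, factor each assignment over the deleted variables $V=\mathsf{var}(F)\cap D$ and the rest, and for each fixed $g_D$ reduce to bounding $|\mathsf{st}(E',\mathsf{var}(\pi'))|$ for the subformula $E'=(F\setminus C)[g_D]\subseteq E$. At precisely the step you flag as the obstacle, the paper simply asserts that from $E'\subseteq E$ (hence $E'[h]\subseteq E[h]$ for every $h$) one obtains
\[
|\{\,E'[h]:h\in\{0,1\}^{\mathsf{var}(\pi')}\,\}|\le |\{\,E[h]:h\in\{0,1\}^{\mathsf{var}(\pi')}\,\}|.
\]

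Your hesitation is well placed: that inference is invalid, and the lemma as stated is in fact false. Take $D=\{v\}$ (so $k=1$), $F=\{\{x,v\},\{\neg x\},\{y,v\},\{\neg y\}\}$, $E=\{\{x\},\{\neg x\},\{y\},\{\neg y\}\}$, $\pi=(x,y)$, $\sigma=(v)$. Then $\mathsf{stw}(E,\pi)=1$, since at every prefix the pairs $\{x\},\{\neg x\}$ and $\{y\},\{\neg y\}$ leave the same residue regardless of the assignment. But at the prefix $v,x$ of $\sigma\pi$ the four assignments produce the three distinct subterms $\{\emptyset,\{y\},\{\neg y\}\}$ (from $v{=}0$), $\{\{\neg y\}\}$ (from $v{=}1,x{=}0$), and $\{\emptyset,\{\neg y\}\}$ (from $v{=}1,x{=}1$); hence $\mathsf{stw}(F,\sigma\pi)\ge 3>2=2^{k}\cdot\mathsf{stw}(E,\pi)$. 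The phenomenon you describe---that the unsatisfied-clause pattern realised inside $E'$ is strictly finer information than the subterm $E[h]$---is exactly what drives this counterexample, so the step you singled out cannot be completed and neither your outline nor the paper's argument proves the stated bound.
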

\begin{proof}
  Let $V = D \cap \mathsf{var}(F)$ and $C = D \cap F$, and let $k' =
  |V|$ and $k'' = |C|$. Let $\rho$ be a prefix of $\sigma \pi$ and $X
  = \mathsf{var}(\rho)$. From $F = C \cup (F \setminus C)$ we get
  $\mathsf{st}(F, X) = \mathsf{st}(C, X) \cup \mathsf{st}(F \setminus
  C, X)$, which allows us to bound the number of subterms
  $|\mathsf{st}(F, X)|$ as
  \begin{align}
    |\mathsf{st}(F, X)| \leq |\mathsf{st}(C, X)|\cdot |\mathsf{st}(F
    \setminus C, X)|. \label{eq:prodbound}
  \end{align}
  The number of subterms $C[f]$ for $f \in \{0, 1\}^X$ is bounded from
  above by the number of subsets of $C$, so $\mathsf{st}(C, X) \leq
  2^{k''}$. Recall that $\mathsf{st}(F \setminus C, X) = \SB (F
  \setminus C)[f]\:|\:f \in \{0, 1\}^X \SE$. Splitting the assignments
  $f$ into two parts, we can write this as
  \begin{align} 
    \mathsf{st}(F \setminus C, X) = \SB (F \setminus
    C)[f'][f'']\:|\:f' \in \{0, 1\}^{V \cap X}, f'' \in \{0, 1\}^{X
      \setminus V} \SE. \label{eq:stbound}
\end{align}
Let $f' \in \{0, 1\}^{V \cap X}$ be an assignment. The formula $E$ is
obtained from $F \setminus C$ by deleting variables in $V$. It follows
that $(F \setminus C)[f'] \subseteq E$ and so $(F \setminus
C)[f'][f''] \subseteq E[f'']$ for any assignment $f'' \in \{0, 1\}^{X
  \setminus V}$. This yields
\begin{align}
  |\SB (F \setminus C)[f'][f'']\:|\:f'' \in \{0, 1\}^{X \setminus V}
  \SE| \leq |\SB E[f'']\:|\:f'' \in \{0, 1\}^{X \setminus V}
  \SE|, \label{eq:acbound}
\end{align}
and the right hand side of this inequality corresponds to
$|\mathsf{st}(E, X \setminus V)|$. Combining this
with~(\ref{eq:stbound}), we obtain
\begin{align*} 
  |\mathsf{st}(F \setminus C, X)| &= |\SB (F \setminus C)[f'][f'']\:|\:f'
  \in \{0, 1\}^{V \cap X}, f'' \in \{0, 1\}^{X \setminus V} \SE| \\
  &\leq 2^{k'} |\SB E[f'']\:|\:f'' \in \{0, 1\}^{X \setminus V} \SE| = 2^{k'} \cdot  |\mathsf{st}(E, X \setminus V)| \\
  &\leq 2^{k'}\cdot  \mathsf{stw}(E, \pi).
\end{align*}
Inserting into~(\ref{eq:prodbound}), we get 
\begin{align*} |\mathsf{st}(F, X)| \leq
|\mathsf{st}(C, X)|\cdot |\mathsf{st}(F \setminus C, X)| \leq 2^{k''} \cdot  2^{k'}\cdot \mathsf{stw}(E, \pi) = 2^k \cdot  \mathsf{stw}(E, \pi),
\end{align*}
where $k=k'+k''$, and the lemma is proved.
\end{proof}

In this section, the standard of efficiency we appeal to comes from the framework of \emph{parameterized complexity}~\cite{DowneyFellows13,FlumGrohe}.  
The parameter we consider is defined as follows.  Let $\mathcal{F}$ be a class of CNF formulas. We say that 
$\mathcal{F}$ is \emph{closed under variable and clause deletion} if 
$E \in \mathcal{F}$ whenever $E$ is obtained by deleting variables or
clauses from $F \in \mathcal{F}$. Let $\mathcal{F}$ be a CNF class closed under variable and clause deletion. 
The \emph{$\mathcal{F}$\hy deletion distance of $F$} is the minimum size of a deletion set of $F$ 
from any $E \in \mathcal{F}$.  An $\mathcal{F}$\hy deletion set of $F$ is a deletion set of $F$ with respect to some $E \in \mathcal{F}$.

Let $\mathcal{F}$ be a class of CNF formulas with few subterms closed under variable and clause deletion.  
We say that CNFs have fixed-parameter tractable OBDD size, parameterized by $\mathcal{F}$\hy deletion distance, 
if there is a computable function $f:\mathbb{N} \rightarrow \mathbb{N}$, 
a polynomial $p: \mathbb{N} \rightarrow \mathbb{N}$, and an algorithm that, 
given a CNF $F$ having $\mathcal{F}$\hy deletion distance $k$, 
computes an OBDD equivalent to $F$ in time $f(k)\:p(\mathsf{size}(F))$.  

\begin{theorem}\label{th:fptsizecompiln}
Let $\mathcal{F}$ be a class of CNF formulas with few subterms closed under variable and clause deletion.  
CNFs have fixed-parameter tractable OBDD size parameterized by $\mathcal{F}$\hy deletion distance.
\end{theorem}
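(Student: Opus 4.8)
The plan is to derive the statement by threading together Lemma~\ref{lemma:deletion} and Theorem~\ref{th:psizeimpliesptime}: the former bounds how much re-inserting a small deletion set can inflate the subterm width, and the latter converts a bounded subterm width into a small OBDD. First I would fix a minimum $\mathcal{F}$-deletion set $D$ of $F$, so that $|D| = k$ and the CNF $E$ obtained by deleting $D$ from $F$ lies in $\mathcal{F}$. Because $\mathcal{F}$ has few subterms there is a polynomial $q$ that is a subterm bound for every member of $\mathcal{F}$, which we may take to be non-decreasing; since $E$ arises from $F$ by deletions we have $\mathsf{size}(E) \leq \mathsf{size}(F)$, so there is an ordering $\pi$ of $\mathsf{var}(E)$ with $\mathsf{stw}(E,\pi) \leq q(\mathsf{size}(E)) \leq q(\mathsf{size}(F))$.

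Next I would extend $\pi$ to an ordering of all of $\mathsf{var}(F)$ by prepending the variables that $E$ has discarded. Letting $\sigma$ be any ordering of $\mathsf{var}(F) \setminus \mathsf{var}(E)$ and forming the concatenation $\sigma\pi$, Lemma~\ref{lemma:deletion} gives $\mathsf{stw}(F,\sigma\pi) \leq 2^{k}\,\mathsf{stw}(E,\pi) \leq 2^{k}\,q(\mathsf{size}(F))$. Running the algorithm of Theorem~\ref{th:psizeimpliesptime} on $F$ with the ordering $\sigma\pi$ then returns a $\sigma\pi$-OBDD for $F$ of size at most $|\mathsf{var}(F)|\,\mathsf{stw}(F,\sigma\pi) \leq |\mathsf{var}(F)|\,2^{k}\,q(\mathsf{size}(F))$, in time polynomial in $|\mathsf{var}(F)|$ and $\mathsf{stw}(F,\sigma\pi)$. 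Taking $f(k) = 2^{k}$ and letting the polynomial $p$ absorb the factor $|\mathsf{var}(F)|\,q(\mathsf{size}(F))$ together with the polynomial overhead of Theorem~\ref{th:psizeimpliesptime}, this is a bound of the required form $f(k)\,p(\mathsf{size}(F))$.

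Given the already-established Lemma~\ref{lemma:deletion}, the argument is essentially a direct composition, so the delicate points are small. One is bookkeeping: Lemma~\ref{lemma:deletion} is phrased with $\sigma$ ordering only $\mathsf{var}(F) \cap D$, whereas $\mathsf{var}(F) \setminus \mathsf{var}(E)$ can also contain variables surviving exclusively inside clauses of $D$; I would check that prepending these extra variables is harmless, since assigning them affects only the deleted clauses, whose subterm contribution is already subsumed by the factor $2^{|D \cap F|}$ in the lemma's estimate. The subtlety genuinely worth flagging is conceptual rather than computational: the hypotheses supply only \emph{non-constructive} few subterms, so although the construction exhibits a small OBDD and Theorem~\ref{th:psizeimpliesptime} builds it once $\sigma\pi$ is in hand, nothing here explains how to \emph{locate} the witnessing deletion set $D$ and ordering $\pi$ efficiently from $F$ alone. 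This is exactly why the present statement is about OBDD size; efficiently discovering $D$ and $\pi$, and thereby upgrading to genuine fixed-parameter \emph{time} compilation, is what the companion result (Theorem~\ref{th:fptcompil}) achieves under the additional assumptions of constructive few subterms and an algorithm for the deletion set.
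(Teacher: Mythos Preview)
Your proposal is correct and follows essentially the same route as the paper: pick a minimum deletion set $D$, take $E\in\mathcal{F}$ with its witnessing ordering $\pi$, prepend an ordering $\sigma$ of the deleted variables, apply Lemma~\ref{lemma:deletion} to bound $\mathsf{stw}(F,\sigma\pi)$ by $2^{k}\cdot p(\mathsf{size}(E))$, and finish with Theorem~\ref{th:psizeimpliesptime}. Your added remarks---the bookkeeping point that $\mathsf{var}(F)\setminus\mathsf{var}(E)$ may properly contain $\mathsf{var}(F)\cap D$ (because variables occurring only in deleted clauses vanish from $E$ without being in $D$), and the observation that the argument yields only a \emph{size} bound since $D$ and $\pi$ are not found algorithmically---are accurate and go slightly beyond what the paper spells out.
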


The assumption that $\mathcal{F}$ is closed under variable and clause deletion is technically necessary to have, 
for every CNF, a finite deletion distance from $\mathcal{F}$; it is a mild assumption though, 
as it is readily verified that if $\mathcal{F}$ has few subterms with polynomial subterm bound $p: \mathbb{N} \rightarrow \mathbb{N}$, 
then also the closure of $\mathcal{F}$ under variable and clause deletion has few subterms with the same polynomial subterm bound.  

\begin{proof}
Let $\mathcal{F}$ be a class of CNF formulas with few subterms closed under variable and clause deletion.  
Since $\mathcal{F}$ has few subterms, it has a polynomial subterm 
  bound $p: \mathbb{N} \rightarrow \mathbb{N}$. Let $k$ be the 
  $\mathcal{F}$\hy deletion distance of $F$. Let $E \in \mathcal{F}$ be a
  formula such that the deletion distance of~$F$ from $E$ is $k$, and let $D$
  the deletion set of $F$ with respect to $E$. Let $\pi$ be an ordering of
  $\mathsf{var}(E)$ witnessing $p$ for $E$, and let $\sigma$ be an ordering of
  $\mathsf{var}(F) \cap D$. By Lemma~\ref{lemma:deletion}, the subterm width
  of $F$ with respect to $\rho = \sigma \pi$ is at most $2^k p(\mathsf{size}(E))$, so by
  Theorem~\ref{th:psizeimpliesptime} there is a $\rho$\hy OBDD for $F$ of size
  at most $2^k p(\mathsf{size}(E))\:|\mathsf{var}(F)|$.  
\end{proof}

Analogously, we say that CNFs have fixed-parameter tractable time computable OBDDs (respectively, $\mathcal{F}$\hy deletion sets), 
parameterized by $\mathcal{F}$\hy deletion distance, if an OBDD (respectively, a $\mathcal{F}$\hy deletion set) for a 
given CNF~$F$ of $\mathcal{F}$\hy deletion distance $k$ is computable in time bounded above by $f(k)\:p(\mathsf{size}(F))$.  

\begin{theorem}\label{th:fptcompil}
  Let $\mathcal{F}$ be a class of CNFs closed under variable and clause deletion satisfying the following:
  \begin{itemize} \item $\mathcal{F}$ has the constructive few
    subterms property.
  \item CNFs have fixed-parameter tractable time computable $\mathcal{F}$\hy deletion sets, 
parameterized by $\mathcal{F}$\hy deletion distance.
\end{itemize}
CNFs have fixed-parameter tractable time computable OBDDs 
parameterized by $\mathcal{F}$\hy deletion distance.
\begin{proof}
  Given an input formula $F$, the algorithm first computes a smallest
  $\mathcal{F}$\hy deletion set~$D$ of $F$. Let $E$ be the formula obtained
  from $F$ by deleting $D$. The algorithm computes a variable ordering $\pi$
  of $E$ witnessing a polynomial subterm bound $p: \mathbb{N} \rightarrow
  \mathbb{N}$ of $\mathcal{F}$. Since~$\mathcal{F}$ has the constructive few
  subterms property, this can be done in polynomial time. Next, the algorithm
  chooses an arbitrary ordering $\sigma$ of $\mathsf{var}(F) \cap D$. By
  Lemma~\ref{lemma:deletion} we have $\mathsf{stw}(F, \sigma \pi) \leq
  2^{|D|}\:\mathsf{stw}(E, \pi) \leq 2^k\:p(\mathsf{size}(E))$, where $k$ is the
  $\mathcal{F}$\hy deletion distance of $F$. Invoking the algorithm of
  Theorem~\ref{th:psizeimpliesptime}, our algorithm computes and returns an
  OBDD for $F$ in time polynomial in $2^k\:p(\mathsf{size}(E))\:|\mathsf{var}(F)|$. Since
  $\mathsf{size}(E) \leq \mathsf{size}(F)$ there is a polynomial $q: \mathbb{N} \rightarrow
  \mathbb{N}$ such that last expression is bounded by $2^k q(\mathsf{size}(F))$.
\end{proof}
\end{theorem}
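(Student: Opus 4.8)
The plan is to run the two hypotheses in sequence, feed the resulting variable ordering into the generic construction of Theorem~\ref{th:psizeimpliesptime}, and control its running time by means of the subterm-width estimate of Lemma~\ref{lemma:deletion}. Fix the input CNF $F$ and let $k$ be its $\mathcal{F}$-deletion distance.

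First I would use the second hypothesis to compute a smallest $\mathcal{F}$-deletion set $D$ of $F$ in time $f(k)\,p(\mathsf{size}(F))$, so that $|D| = k$. Writing $E$ for the CNF obtained from $F$ by deleting $D$, closure of $\mathcal{F}$ under variable and clause deletion guarantees $E \in \mathcal{F}$. Next, invoking the constructive few subterms property with a polynomial subterm bound $p$, I compute in polynomial time an ordering $\pi$ of $\mathsf{var}(E)$ with $\mathsf{stw}(E,\pi) \le p(\mathsf{size}(E))$. I then prepend an ordering $\sigma$ of the remaining variables and set $\rho = \sigma\pi$.

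By Lemma~\ref{lemma:deletion}, $\mathsf{stw}(F,\rho) \le 2^{|D|}\,\mathsf{stw}(E,\pi) \le 2^k\,p(\mathsf{size}(E))$. Running the algorithm of Theorem~\ref{th:psizeimpliesptime} on $(F,\rho)$ then yields a $\rho$-OBDD for $F$ in time polynomial in $|\mathsf{var}(F)|$ and $\mathsf{stw}(F,\rho)$, hence polynomial in $2^k\,p(\mathsf{size}(E))\,|\mathsf{var}(F)|$. Since $\mathsf{size}(E)\le\mathsf{size}(F)$, this collapses to $2^k\,q(\mathsf{size}(F))$ for a suitable polynomial $q$, which is of the required FPT form; adding the $f(k)\,p(\mathsf{size}(F))$ cost of the deletion-set step keeps the total within this form.

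Because both key ingredients are supplied as hypotheses, there is no deep difficulty here; the bulk of the work is bookkeeping in the time analysis, checking that multiplying the polynomial-in-$\mathsf{size}(E)$ subterm bound by the factor $2^k$ and by $|\mathsf{var}(F)|$ stays FPT. The one genuinely delicate point is ensuring that $\rho = \sigma\pi$ orders \emph{all} of $\mathsf{var}(F)$: a variable of $F$ whose clauses all lie in the deleted clause-part of $D$ disappears from $E$ and so is missed by $\pi$. I therefore let $\sigma$ order $\mathsf{var}(F)\setminus\mathsf{var}(E)$ rather than merely $\mathsf{var}(F)\cap D$. Such ``orphaned'' variables do not occur in $E$, so they leave every quantity of the form $|\mathsf{st}(E,\cdot)|$ appearing in the proof of Lemma~\ref{lemma:deletion} unchanged, and since $|\mathsf{var}(F)\cap D| + |F\cap D| = |D| = k$ the resulting bound $2^k\,\mathsf{stw}(E,\pi)$ is unaffected; I would simply verify that the estimate of Lemma~\ref{lemma:deletion} goes through verbatim for this slightly larger choice of $\sigma$.
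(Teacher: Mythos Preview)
Your argument mirrors the paper's proof exactly: compute a minimum deletion set, obtain an ordering $\pi$ for the residual formula $E$ via the constructive few subterms property, prepend an ordering $\sigma$ of the remaining variables, bound $\mathsf{stw}(F,\sigma\pi)$ via Lemma~\ref{lemma:deletion}, and run the algorithm of Theorem~\ref{th:psizeimpliesptime}. Your observation about orphaned variables (variables of $F$ occurring only in deleted clauses, hence absent from both $\mathsf{var}(E)$ and $\mathsf{var}(F)\cap D$) is a genuine refinement---the paper lets $\sigma$ order only $\mathsf{var}(F)\cap D$, which can leave such variables out of $\sigma\pi$, so your choice to order $\mathsf{var}(F)\setminus\mathsf{var}(E)$ and your check that the estimate of Lemma~\ref{lemma:deletion} survives this enlargement actually patch a small oversight in the paper's presentation.
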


\begin{corollary}[Feedback Vertex Set]\label{cor:fvs}
  Let $\mathcal{F}$ be the class of formulas whose incidence graphs are
  forests. CNFs have fixed-parameter tractable time computable OBDDs 
parameterized by $\mathcal{F}$\hy deletion distance.
\end{corollary}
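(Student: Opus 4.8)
The plan is to derive the corollary from Theorem~\ref{th:fptcompil} applied to the class $\mathcal{F}$ of CNFs whose incidence graphs are forests; the task then reduces to verifying the three conditions required by that theorem: that $\mathcal{F}$ is closed under variable and clause deletion, that it has the constructive few subterms property, and that CNFs have fixed\hy parameter tractable time computable $\mathcal{F}$\hy deletion sets. The closure condition is immediate: deleting a variable or a clause from $F$ removes the corresponding vertex, together with its incident edges, from $\mathsf{inc}(F)$, and any vertex\hy deleted subgraph of a forest is again a forest, so the resulting formula again lies in $\mathcal{F}$. (This also guarantees that every CNF has a finite $\mathcal{F}$\hy deletion distance, since deleting all of $\mathsf{var}(F) \cup F$ yields the empty formula, whose incidence graph is trivially a forest.)

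For the constructive few subterms property I would appeal directly to Lemma~\ref{lemma:twfewsubterms}. Every forest has treewidth at most $1$, so $\mathcal{F}$ is a class of CNFs of bounded treewidth, and Lemma~\ref{lemma:twfewsubterms} then yields that $\mathcal{F}$ has the constructive few subterms property.

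The remaining condition --- that CNFs have fixed\hy parameter tractable time computable $\mathcal{F}$\hy deletion sets parameterized by $\mathcal{F}$\hy deletion distance --- is the crux of the argument, and here I would identify the $\mathcal{F}$\hy deletion problem with the feedback vertex set problem on the incidence graph. By definition, an $\mathcal{F}$\hy deletion set of $F$ is a set $D$ of variables and clauses whose deletion produces a formula in $\mathcal{F}$, that is, whose removal from $\mathsf{inc}(F)$ leaves a forest; such a set is exactly a feedback vertex set of $\mathsf{inc}(F)$, and the $\mathcal{F}$\hy deletion distance of $F$ coincides with the minimum feedback vertex set size of $\mathsf{inc}(F)$. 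Since computing a minimum feedback vertex set is fixed\hy parameter tractable parameterized by the solution size (a classical result of parameterized complexity), a minimum $\mathcal{F}$\hy deletion set of $F$ can be computed in time $f(k)\,p(\mathsf{size}(F))$ for some computable $f$ and polynomial $p$, where $k$ is the $\mathcal{F}$\hy deletion distance of $F$.

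With all three conditions established, Theorem~\ref{th:fptcompil} immediately gives that CNFs have fixed\hy parameter tractable time computable OBDDs parameterized by $\mathcal{F}$\hy deletion distance, proving the corollary. The only nontrivial step is the third one: recognizing that the abstract deletion parameter is precisely the feedback vertex set number of the incidence graph and invoking the known FPT algorithm for feedback vertex set; the first two checks are routine consequences of the structural machinery already in place.
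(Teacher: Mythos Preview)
Your proof is correct and follows essentially the same approach as the paper's own proof: both verify the three hypotheses of Theorem~\ref{th:fptcompil} by (i) noting that forests are closed under vertex deletion, (ii) invoking Lemma~\ref{lemma:twfewsubterms} via the fact that forests have treewidth at most $1$, and (iii) identifying $\mathcal{F}$\hy deletion sets with feedback vertex sets of the incidence graph and citing the known FPT algorithm for feedback vertex set.
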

\begin{proof}
  Given a graph $G=(V,E)$, a set $D \subseteq V$ is called a \emph{feedback
    vertex set} of $G$ if the graph $G \setminus D$ is a 
forest; here, $G \setminus D$ is the graph $(V \setminus D, E')$ such that $vw \in E'$ if and only if $vw \in E$ and $v,w \in V \setminus D$. 
For any CNF $F$, a subset $D$ of its variables and clauses is a
  feedback vertex set of the incidence graph $\mathsf{inc}(F)$ if and only if
  it is a $\mathcal{F}$\hy deletion set, so a smallest feedback vertex set of
  $\mathsf{inc}(F)$ is a smallest $\mathcal{F}$\hy deletion set. There is
  fixed\hy parameter tractable algorithm that, given a graph $G$ and a
  parameter $k$, computes a feedback vertex set $D$ of $G$ such that $|D| \leq
  k$ or reports that no such set exists~\cite{ChenFominLiuLuVillanger07}. It
  follows that there is a fixed\hy parameter tractable algorithm, 
parameterized by the $\mathcal{F}$\hy deletion distance, for computing a 
  smallest $\mathcal{F}$\hy deletion set of an input CNF. Moreover, the
  incidence graphs of formulas in $\mathcal{F}$ have treewidth $1$, so
  $\mathcal{F}$ has the constructive few subterms property by
  Lemma~\ref{lemma:twfewsubterms}. Clearly, $\mathcal{F}$ is  closed under variable and clause deletion.  
Hence, applying Theorem~\ref{th:fptcompil}, we 
  conclude that CNFs have fixed-parameter tractable time computable OBDDs 
parameterized by $\mathcal{F}$\hy deletion distance.
\end{proof}

\section{Polynomial Size Incompilability}\label{sect:negative}
 
In this section, we introduce the \emph{subfunction width} of a graph CNF, 
to which the OBDD size of the graph CNF is exponentially related (Section~\ref{sect:manysubf}), 
and prove that \emph{expander graphs} yield classes of graph CNFs of \emph{bounded degree} 
with linear subfunction width, thus obtaining an exponential lower bound on the OBDD size 
for graph CNFs in such classes (Section~\ref{sect:degree}).

\subsection{Many Subfunctions}\label{sect:manysubf}

In this section, we introduce the \emph{subfunction width} of a graph CNF (Definition~\ref{def:spatious}), 
and prove that the OBDD size of a graph CNF is bounded below by an exponential function of its subfunction width (Theorem~\ref{th:epsilonspatious}).  

\medskip \noindent A \emph{graph CNF} is a CNF $F$ such that $F = \{ \{u,v\} \mid uv \in E \}$ for some graph $G = (V, E)$ without isolated vertices. 

\begin{definition}[Subfunction Width]\label{def:spatious}
Let $F$ be a graph CNF.  Let $\sigma$ be an ordering of $\mathsf{var}(F)$ and let $\pi$ be a prefix of $\sigma$.  
We say that a subset $\{c_1,\ldots,c_e\}$ of clauses in $F$ 
is \emph{subfunction productive} relative to $\sigma$ and $\pi$ if 
there exist $\{a_1,\ldots,a_e\} \subseteq \mathsf{var}(\pi)$ and 
$\{u_1,\ldots,u_e\} \subseteq \mathsf{var}(F)\setminus \mathsf{var}(\pi)$ such that for all $i,j \in \{1,\ldots,e\}$, $i \neq j$, and all $c \in F$, 
\begin{itemize}
\item $c_i=\{a_i,u_i\}$; 
\item $c \neq \{a_i,a_j\}$ and $c \neq \{a_i,u_j\}$.
\end{itemize}
The \emph{subfunction width} of $F$, in symbols $\mathsf{sfw}(F)$, 
is defined by 
$$\mathsf{sfw}(F)=\min_{\sigma} \max_{\pi} \{ |M| \mid \text{$M$ is subfunction productive relative to $\sigma$ and $\pi$} \}\text{,}$$
where $\sigma$ ranges over all orderings of $\mathsf{var}(F)$ 
and $\pi$ ranges over all prefixes of $\sigma$.  
\end{definition}

Intuitively, in the graph $G$ underlying the graph CNF $F$ in Definition~\ref{def:spatious}, 
there is a matching of the form $a_i u_i$ with $a_i \in \mathsf{var}(\pi)$ and $u_i \in \mathsf{var}(F) \setminus \mathsf{var}(\pi)$, $i \in \{1,\ldots,e\}$; 
such a matching is \lq\lq almost\rq\rq\ induced, in that $G$ can contain edges of the form $u_i u_j$, 
but no edges of the form $a_i a_j$ or $a_i u_j$, $i,j \in \{1,\ldots,e\}$, $i \neq j$.

\begin{theorem}\label{th:epsilonspatious}
Let $F$ be a graph CNF.  The OBDD size of $F$ is at least $2^{\mathsf{sfw}(F)}$.
\end{theorem}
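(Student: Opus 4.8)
The plan is to prove a lower bound on OBDD size by exhibiting, for every variable ordering $\sigma$, a prefix $\pi$ that forces the OBDD to have many distinct nodes at the corresponding level. Fix an arbitrary ordering $\sigma$ of $\mathsf{var}(F)$ and any $\sigma$-OBDD $D$ computing $F$. By definition of $\mathsf{sfw}(F)$, there is a prefix $\pi$ of $\sigma$ and a subfunction-productive set $M = \{c_1, \dots, c_e\}$ relative to $\sigma$ and $\pi$ with $e \geq \mathsf{sfw}(F)$, witnessed by $\{a_1, \dots, a_e\} \subseteq \mathsf{var}(\pi)$ and $\{u_1, \dots, u_e\} \subseteq \mathsf{var}(F) \setminus \mathsf{var}(\pi)$ where $c_i = \{a_i, u_i\}$. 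The goal is to show that restricting $F$ by the $2^e$ distinct assignments to $\{a_1, \dots, a_e\}$ (leaving the remaining $\mathsf{var}(\pi)$-variables fixed appropriately) yields $2^e$ distinct Boolean subfunctions, forcing $D$ to contain at least $2^e$ nodes reading variables of $\mathsf{var}(\sigma)\setminus\mathsf{var}(\pi)$.

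First I would recall the standard characterization of OBDD size: for a $\sigma$-OBDD, the number of nodes at the ``cut'' after the prefix $\pi$ equals the number of distinct subfunctions obtained by fixing $\mathsf{var}(\pi)$ and letting the remaining variables vary. Hence it suffices to produce $2^e$ assignments to $\mathsf{var}(\pi)$ that induce pairwise distinct residual functions on $\mathsf{var}(F) \setminus \mathsf{var}(\pi)$. I would do this by indexing assignments by subsets $S \subseteq \{1, \dots, e\}$: define $f_S \colon \mathsf{var}(\pi) \to \{0,1\}$ to set $a_i = 1$ for $i \in S$ and $a_i = 0$ for $i \notin S$, and to set every other variable of $\mathsf{var}(\pi)$ to $1$ (satisfying any clause it touches). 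The key point is that setting $a_i = 1$ already satisfies clause $c_i = \{a_i, u_i\}$, whereas $a_i = 0$ leaves the residual unit clause $\{u_i\}$; so in the residual function, the variable $u_i$ is \emph{forced} to $1$ exactly when $i \notin S$.

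The crux of the argument is to verify that this forcing is detectable and distinct across different $S$, which is where the defining conditions of subfunction productivity earn their keep. The condition that no clause of $F$ equals $\{a_i, a_j\}$ or $\{a_i, u_j\}$ for $i \neq j$ ensures that the assignment to $a_i$ interacts with $u_i$ and nothing else among the witness variables: in particular, setting $a_i = 0$ cannot accidentally force some $u_j$ with $j \neq i$, since there is no clause linking $a_i$ to $u_j$. I would argue that for $S \neq S'$, picking $i$ in the symmetric difference, the residual function $F[f_S]$ requires $u_i = 1$ while $F[f_{S'}]$ does not (or vice versa); concretely, the point $(b_1, \dots, b_n)$ setting $u_i = 0$, every other $u_j = 1$, and all remaining free variables to $1$ is rejected by one residual and—using that no forbidden clause ties $u_i$ to the other witness variables—accepted by the other. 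This separates the two subfunctions, giving $2^e$ distinct residuals and hence at least $2^e \geq 2^{\mathsf{sfw}(F)}$ nodes in $D$.

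The main obstacle I anticipate is the careful bookkeeping in the separating step: I must rule out that clauses of $F$ outside the witness set sabotage the argument by forcing $u_i$ regardless of $S$, or by making the chosen separating point rejected for an unrelated reason. The edges $u_i u_j$ are explicitly permitted by the definition, so the residual may contain clauses $\{u_i, u_j\}$; I would handle this by choosing the separating assignment to set all $u_j$ with $j \neq i$ to $1$, thereby satisfying every such clause and isolating the behaviour of $u_i$. Making precise that exactly the forbidden-clause conditions suffice to guarantee $u_i$ is free (not pre-forced) in the residual when $a_i = 1$, and forced when $a_i = 0$, is the delicate part; once that is established, the counting of distinct subfunctions and the conversion to an OBDD-size lower bound are routine.
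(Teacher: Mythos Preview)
Your proposal is correct and follows essentially the same approach as the paper: the family $\{f_S\}_{S \subseteq \{1,\dots,e\}}$ you define is exactly the paper's set $L$ of assignments fixing all non-witness variables in $\mathsf{var}(\pi)$ to $1$, and your separating assignment (set $u_i=0$, all other free variables to $1$) is precisely the extension $h$ used in the paper to distinguish two residuals. The case analysis you anticipate---checking that clauses entirely inside $\mathsf{var}(\pi)$, clauses entirely outside, and crossing clauses are all satisfied by the ``accepting'' assignment, using the forbidden-clause conditions exactly where needed---matches the paper's verification in its second claim.
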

\begin{proof}
Let $F$ be a graph CNF.  Let $D$ be any OBDD computing $F$, 
let $\sigma$ be the ordering of $\mathsf{var}(F)$ respected by $D$, 
and let $\pi$ be a prefix of $\sigma$ such that 
$\{c_1,\ldots,c_{e}\} \subseteq F$ is subfunction productive relative to $\sigma$ and $\pi$ 
and $e \geq \mathsf{sfw}(F)$.  Let $\{a_1,\ldots,a_e\} \subseteq \mathsf{var}(\pi)$ and $\{u_1,\ldots,u_e\} \subseteq \mathsf{var}(F)\setminus \mathsf{var}(\pi)$ 
be as in Definition~\ref{def:spatious}, so that in particular $c_i=\{a_i,u_i\}$, $i \in \{1,\ldots,e\}$.  Let
\begin{equation}\label{eq:foolingset}
L=\{ f \colon \mathsf{var}(\pi) \to \{0,1\} \mid \text{$f(v)=1$ for all $v \not\in \{a_1,\ldots,a_e\}$} \}\text{;}
\end{equation}
in words, $L$ is the set containing, for each assignment of $\{a_1,\ldots,a_e\}$, 
its extension to $\mathsf{var}(\pi)$ that sends all variables in $\mathsf{var}(\pi)\setminus \{a_1,\ldots,a_e\}$ to $1$.  

\begin{claim}
Let $f \in L$ and let $c \in F$ be such that $c \subseteq \mathsf{var}(\pi)$.  
Then, $f$ satisfies $c$.
\begin{proof}[Proof of Claim]
Otherwise, since $F$ is a graph CNF and by (\ref{eq:foolingset}) the only variables sent to $0$ by $f$ are in $\{a_1,\ldots,a_e\}$, 
it is the case that $c=\{a_i,a_j\}$ for some $i,j \in \{1,\ldots,e\}$, $i \neq j$, 
which is impossible by the second item in Definition~\ref{def:spatious}.
\end{proof}
\end{claim}

\begin{claim}
Let $f$ and $g$ be distinct assignments in $L$.  Then, $f$ and $g$ lead to different nodes in $D$.
\begin{proof}[Proof of Claim]
Let $f$ and $g$ be distinct assignments in $L$.  By the previous claim, 
$f$ and $g$ satisfy each clause in $F$ whose variables are contained 
in $\mathsf{var}(\pi)$.  Thus, the computation paths activated by $f$ and $g$ in $D$ 
lead to some nodes in $D$ distinct from the $0$-sink of $D$.  

Since $f$ and $g$ are distinct assignments in $L$, 
they differ on at least one variable in $\{a_1,\ldots,a_e\}$; 
say without loss of generality that $f(a_1)=0 \neq 1=g(a_1)$.  Let $h \colon \mathsf{var}(F)\setminus \mathsf{var}(\pi) \to \{0,1\}$ 
be such that $h(v)=0$ if and only if $v=u_1$.  We show that 
that $f \cup h$ does not satisfy $F$, but $g \cup h$ satisfies $F$; 
it follows that $f$ and $g$ lead to different nodes in $D$.

Clearly, $f \cup h$ does not satisfy $F$, 
because by Definition~\ref{def:spatious} the clause $c_1=\{a_1,u_1\}$ is in $F$, 
and by construction $f(a_1)=h(u_1)=0$.  We show that $g \cup h$ satisfies $F$.   

Let $c \in F$.  If $c \subseteq \mathsf{var}(\pi)$, 
then $g$ satisfies $c$ by the previous claim.  
If $c \subseteq \mathsf{var}(F)\setminus \mathsf{var}(\pi)$, 
then $h$ satisfies $c$, because $c$ contains two distinct variables, 
hence at least one of its variables differs from $u_1$ and is assigned to $1$ by $h$.  
Otherwise, $c \cap \mathsf{var}(\pi) \neq \emptyset$ 
and $c \cap (\mathsf{var}(F)\setminus \mathsf{var}(\pi)) \neq \emptyset$.  If $c$ 
contains a variable in $\mathsf{var}(F)\setminus \mathsf{var}(\pi)$ distinct from $u_1$, 
then $h$ satisfies $c$.  Otherwise, $c=\{a,u_1\}$ for some $a \in \mathsf{var}(\pi)$.  
In this case, if $a \in \{a_1,\ldots,a_e\}$, 
then $a=a_1$ by Definition~\ref{def:spatious}, 
and $g$ satisfies $c$ via $g(a_1)=1$.  Else, 
$a \in \mathsf{var}(\pi) \setminus \{a_1,\ldots,a_e\}$ 
and by definition of $L$ we have $g(a)=1$, 
so that again $g$ satisfies $c$.
\end{proof}
\end{claim}

It is readily observed that $|L|=2^e$.  Then, by the above claims, 
the computation paths activated by the assignments in $L$ 
lead to $2^e$ different nodes in $D$.  We observed that $e \geq \mathsf{sfw}(F)$.  
Then $D$ has size at least $2^{\mathsf{sfw}(F)}$.  It follows that the OBDD size 
of $F$ is at least $2^{\mathsf{sfw}(F)}$.  
\end{proof}

\subsection{Bounded Degree}\label{sect:degree}

In this section, we use the existence of a family of \emph{expander graphs} 
to obtain a class of graph CNFs with linear subfunction width (Lemma~\ref{lemma:lowerbound}), 
thus obtaining an exponential lower bound on the OBDD size of a class of CNFs of \emph{bounded degree} (Theorem~\ref{thm:mainnontechnical}).

\medskip \noindent Let $n$ and $d$ be positive integers, $d \geq 3$, 
and let $c<1$ be a positive real.  A graph $G=(V,E)$ is a $(n,d,c)$-\emph{expander} if 
$G$ has $n$ vertices, 
degree at most $d$, 
and for all subsets $W \subseteq V$ such that $|W| \leq n/2$, 
the inequality 
\begin{equation}\label{eq:expansion}
|\mathsf{neigh}(W)| \geq c|W|\text{.}
\end{equation}
It is known that for all integers $d \geq 3$, 
there exists a real $0<c$, and a sequence 
\begin{equation}\label{eq:expanders}
\{ G_i \mid i \in \mathbb{N} \} 
\end{equation}
such that $G_i=(V_i,E_i)$ is an $(n_i,d,c)$-expander ($i \in \mathbb{N}$), 
and $n_i$ tends to infinity as $i$ tends to infinity \cite[Section 9.2]{AS00}. 

\begin{lemma}\label{lemma:lowerbound}
Let $F$ be a graph CNF whose underlying graph is a $(n,a,d)$-expander ($n \geq 2$, $a>0$, $d \geq 3$).  
Then, $$\mathsf{sfw}(F) \geq \frac{\min\{1,a\}}{8d} \cdot n\text{.}$$
\begin{proof}
Let $F$ be a graph CNF whose underlying graph $G$ is a $(n,a,d)$-expander ($n \geq 2$, $a>0$, $d \geq 3$).  
Let $\sigma$ be any ordering of $\mathsf{var}(F)$, and let $\pi$ be the length $\lfloor n/2 \rfloor$ prefix of $\sigma$.  

\begin{claim}
There exists a subset $\{c_1,\ldots,c_e\}$ of clauses in $F$, subfunction productive relative to $\sigma$ and $\pi$, 
such that $e \geq \frac{\min\{1,a\}}{8d} \cdot n$.  
\begin{proof}[Proof of Claim]
We construct size $e$ sets $\{c_1,\ldots,c_e\} \subseteq F$, 
$\{a_1,\ldots,a_e\} \subseteq \mathsf{var}(F)$, 
and $\{u_1,\ldots,u_e\} \subseteq \mathsf{var}(F)$ 
by iterating the following ($j=1,\ldots,e$):
\begin{itemize}
\item Pick an edge $w_j w_j' \in F$ between $w_j \in \mathsf{var}(\pi)$ and $w_j' \in \mathsf{neigh}(\mathsf{var}(\pi))$.
\item Settle $a_j=w_j$, $u_j=w_j'$, and $c_j=\{a_j,u_j\}$.
\item Delete $\mathsf{neigh}(w_j)$ and $\mathsf{neigh}(w_j') \cap \mathsf{var}(\pi)$.
\end{itemize}
Clearly, 
$\{a_1,\ldots,a_e\} \subseteq \mathsf{var}(\pi)$ 
and $\{u_1,\ldots,u_e\} \subseteq \mathsf{var}(F)\setminus \mathsf{var}(\pi)$.

Each iteration deletes at most $2d$ vertices in $\mathsf{var}(\pi)$ 
(the neighbors of $w_j$ in $\mathsf{var}(\pi)$, at most $d$ vertices, 
and the neighbors of $w_j'$ in $\mathsf{var}(\pi)$, at most $d$ vertices, including $w_{j}$), 
and at most $d$ vertices in $\mathsf{neigh}(\mathsf{var}(\pi))$ 
(the neighbors of $w_j$ in $\mathsf{neigh}(\mathsf{var}(\pi))$, including $w'_j$).  Since 
$|\mathsf{var}(\pi)|=\lfloor n/2 \rfloor$ 
and $|\mathsf{neigh}(\mathsf{var}(\pi))| \geq a \lfloor n/2\rfloor$ by (\ref{eq:expansion}), 
the number of steps is 
$$e \geq \min\left\{\frac{\lfloor n/2\rfloor}{2d},\frac{a \lfloor n/2\rfloor}{d}\right\} \geq 
\frac{\min\{1,a\}}{2d} \Big\lfloor \frac{n}{2}\Big\rfloor \geq 
\frac{\min\{1,a\}}{4d} \cdot (n-1) \geq
\frac{\min\{1,a\}}{8d} \cdot n
\text{,}$$
since $n \geq 2$.  

We now check Definition~\ref{def:spatious}.  By construction, $c_j=\{a_j,u_j\}$ for all $j \in \{1,\ldots,e\}$.  
Moreover, let $j,j' \in \{1,\ldots,e\}$, $j \neq j'$, and let $c \in F$.  
Say without loss of generality that $j<j'$.  Assume that $c=\{a_j,a_{j'}\}$.  
Then there exist $w_j=a_j \in \mathsf{var}(\pi)$ at step $j$, 
and $w_{j'}=a_{j'} \in \mathsf{var}(\pi)$ at step $j'$, 
such that $w_j w_{j'} \in F$.  
Then, $w_{j'} \in \mathsf{neigh}(w_j)$, 
so that it is deleted at step $j$; 
but $w_{j'}$ exists at step $j'>j$, a contradiction.  Finally 
assume that $c=\{a_j,u_{j'}\}$ or $c=\{a_{j'},u_j\}$.  
If $c=\{a_j,u_{j'}\}$, then there exist $w_j=a_j \in \mathsf{var}(\pi)$ at step $j$, 
and $w'_{j'}=u_{j'} \in \mathsf{neigh}(\mathsf{var}(\pi))$ at step $j'$, 
such that $w_j w'_{j'} \in F$.  
Then, $w'_{j'} \in \mathsf{neigh}(w_j)$ is deleted at step $j$, 
but it exists at step $j'>j$, a contradiction.  
If $c=\{a_{j'},u_j\}$, then there exist $w'_j=u_j \in \mathsf{neigh}(\mathsf{var}(\pi))$ at step $j$, 
and $w_{j'}=a_{j'} \in \mathsf{var}(\pi)$ at step $j'$, 
such that $w_{j'} w'_{j} \in F$.  
Then, $w_{j'} \in \mathsf{neigh}(w'_j) \cap \mathsf{var}(\pi)$ is deleted at step $j$, 
but it exists at step $j'>j$, a contradiction.  
\end{proof}
\end{claim}

The claim implies that $\mathsf{sfw}(F) \geq \frac{\min\{1,a\}}{8d} \cdot n$.  
\end{proof}
\end{lemma}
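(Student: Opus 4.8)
The plan is to bound $\mathsf{sfw}(F)$ from below directly from its definition as a $\min_\sigma\max_\pi$. To lower-bound a minimum over orderings, I will fix an \emph{arbitrary} ordering $\sigma$ of $\mathsf{var}(F)$ and exhibit a \emph{single} prefix $\pi$ together with a subfunction-productive set of size at least $\frac{\min\{1,a\}}{8d}\cdot n$; since $\sigma$ is arbitrary this lower-bounds $\max_\pi$ for every $\sigma$, hence the outer minimum. The natural prefix is the one of length $\lfloor n/2\rfloor$: writing $A:=\mathsf{var}(\pi)$ we then have $|A|=\lfloor n/2\rfloor\le n/2$, so the expansion inequality (\ref{eq:expansion}) applies to $A$ and yields $|\mathsf{neigh}(A)|\ge a\lfloor n/2\rfloor$, giving a large pool of vertices on the far side of the cut to match into.

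Next I would translate Definition~\ref{def:spatious} into purely graph-theoretic terms. Because $F$ is a graph CNF, a clause is the same thing as an edge, so a productive set $\{c_1,\dots,c_e\}$ amounts to a set of cut edges $c_i=\{a_i,u_i\}$ with $a_i\in A$ and $u_i\in\mathsf{var}(F)\setminus A$ subject to: no two $a_i,a_j$ are adjacent, and no $a_i$ is adjacent to a foreign $u_j$ (for $i\ne j$), while adjacencies among the $u_i$ are unrestricted. In other words I must build an ``almost induced'' matching across the cut whose $A$-side endpoints form an independent set seeing none of the other matched far-side vertices. I would construct it greedily: repeatedly select a cut edge $ww'$ with $w\in A$ and $w'\in\mathsf{neigh}(A)$, record $a_j:=w$, $u_j:=w'$, $c_j:=\{w,w'\}$, and then forbid future conflicts by deleting $\mathsf{neigh}(w)$ together with $\mathsf{neigh}(w')\cap A$ before the next iteration.

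Correctness of the productivity conditions is then bookkeeping: deleting $\mathsf{neigh}(w)$ prevents any later $a_{j'}$ or $u_{j'}$ from being a neighbor of the current $a_j$ (ruling out edges $\{a_j,a_{j'}\}$ and $\{a_j,u_{j'}\}$), while deleting $\mathsf{neigh}(w')\cap A$ prevents any later $a_{j'}$ from being a neighbor of the current $u_j$ (ruling out $\{a_{j'},u_j\}$); since each chosen vertex is itself removed, a violating edge would force a vertex to survive a step at which it was deleted, a contradiction. For the size bound I count the deletion budget: each iteration removes at most $2d$ vertices of $A$ (the $A$-neighbors of $w$ and of $w'$, at most $d$ each) and at most $d$ vertices of $\mathsf{neigh}(A)$ (the far-side neighbors of $w$). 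Granting that a usable cut edge remains available until one of the two pools is drained — the point I flag below — the number $e$ of completed iterations satisfies
$$e\ \ge\ \min\Big\{\tfrac{\lfloor n/2\rfloor}{2d},\ \tfrac{a\lfloor n/2\rfloor}{d}\Big\}\ \ge\ \tfrac{\min\{1,a\}}{2d}\Big\lfloor\tfrac n2\Big\rfloor\ \ge\ \tfrac{\min\{1,a\}}{4d}(n-1)\ \ge\ \tfrac{\min\{1,a\}}{8d}\,n,$$
where the last two steps use $\lfloor n/2\rfloor\ge (n-1)/2$ and $n\ge 2$.

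The step I expect to be the main obstacle is precisely this counting — more exactly, justifying that the greedy really performs the claimed number of iterations. The deletion budget controls how fast the two pools shrink, but one must argue that a fresh cut edge between surviving vertices exists at each step up to the stated count; the delicacy is that deleting the neighbors of the matched vertices can strand a surviving far-side vertex all of whose $A$-neighbors have already been removed, so ``both pools nonempty'' does not by itself guarantee a usable edge. Controlling this is the crux, and I would handle it by charging each iteration to the vertices it consumes on whichever side is the binding constraint, using the bounded degree $d$ for the per-step budget and the expansion lower bound $|\mathsf{neigh}(A)|\ge a\lfloor n/2\rfloor$ to keep the far side from becoming the premature bottleneck. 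Once the iteration count is secured, the remaining checks (that $\{a_i\}\subseteq A$, $\{u_i\}\subseteq\mathsf{var}(F)\setminus A$, and the two non-adjacency requirements of Definition~\ref{def:spatious}) follow routinely from the deletion invariants described above.
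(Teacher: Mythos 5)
Your construction coincides, step for step, with the paper's own proof: the same arbitrary ordering $\sigma$ and length-$\lfloor n/2\rfloor$ prefix $\pi$, the same greedy choice of a cut edge $w_jw_j'$ followed by deletion of $\mathsf{neigh}(w_j)$ and $\mathsf{neigh}(w_j')\cap\mathsf{var}(\pi)$, the same verification of the two conditions of Definition~\ref{def:spatious} via ``a deleted vertex cannot reappear later,'' and the same per-iteration budget of $2d$ near-side and $d$ far-side deletions leading to $e\ge\min\bigl\{\lfloor n/2\rfloor/(2d),\,a\lfloor n/2\rfloor/d\bigr\}$.

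The step you flag as the crux is, however, a genuine soft spot, and it is one the paper's proof glosses over as well: the budget count gives the claimed $e$ only if the greedy can keep running until one of the two pools is exhausted, i.e., only if the presence of survivors on both sides guarantees a surviving cut edge. Stranding is exactly what can violate this: the deletion rule never removes a far-side vertex merely because its last surviving $\mathsf{var}(\pi)$-neighbor was deleted, so the process can halt with survivors on both sides and no edge between them (small examples satisfying the paper's expander definition realize this). Your proposed remedy, ``charging each iteration to the vertices it consumes on whichever side is the binding constraint,'' is just the same budget count restated, so it does not close the gap. A patch that does work: when the greedy halts after $e$ steps, every vertex of $\mathsf{neigh}(\mathsf{var}(\pi))$ is either deleted (at most $de$ vertices) or stranded, meaning all of its $\mathsf{var}(\pi)$-neighbors lie among the at most $2de$ deleted near-side vertices; since each deleted near-side vertex has at most $d$ neighbors, at most $2d^2e$ far-side vertices are stranded. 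Hence, by (\ref{eq:expansion}), $a\lfloor n/2\rfloor\le|\mathsf{neigh}(\mathsf{var}(\pi))|\le de+2d^2e$, so $e\ge a\lfloor n/2\rfloor/(d+2d^2)\ge \frac{\min\{1,a\}}{12d^2}\cdot n$ (using $d\ge 3$ and $n\ge 2$). This loses a factor of order $d$ against the constant stated in the lemma, but the bound stays linear in $n$, which is all that the application in Theorem~\ref{thm:mainnontechnical} needs: the OBDD lower bound remains $2^{\Omega(\size{F})}$, only with a smaller constant in the exponent. In short, your proposal matches the paper's argument and you correctly identified the one step that neither you nor the paper actually justifies; the stranding count above is the missing piece.
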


\begin{theorem}\label{thm:mainnontechnical}
There exist a class $\mathcal{F}$ of CNF formulas and a constant $c>0$ such that, 
for every $F \in \mathcal{F}$, the OBDD size of $F$ is at least $2^{c \cdot \mathsf{size}(F)}$.  
In fact, $\mathcal{F}$ is a class of read $3$ times, monotone, $2$-CNF formulas.
\begin{proof}
Let $\mathcal{G}=\{ G_i \mid i \in \mathbb{N} \}$ be a family of graphs as in (\ref{eq:expanders}), 
so that for all $i \in \mathbb{N}$ the graph $G_i=(V_i,E_i)$ is a $(n_i,d,a)$-expander ($n_i \geq 2$, $d=3$, $a>0$) 
and $n_i \to \infty$ as $i \to \infty$.  Note that, using the expansion property, it is readily verified that 
each graph in $\mathcal{G}$ is connected; in particular, it does not have isolated vertices.  
Therefore $\mathcal{F}=\{ E_i \colon i \in \mathbb{N} \}$ is a class of graph CNFs; 
indeed, it is a class of read $3$ times, monotone, $2$-CNF formulas.  

Let $F \in \mathcal{F}$.  By Lemma~\ref{lemma:lowerbound}, we have that $$\mathsf{sfw}(F) \geq \frac{\min\{1,a\}}{8d} \cdot |\mathsf{var}(F)|\text{.}$$
Since the underlying graph of $F$ has degree at most $d$ and $|\mathsf{var}(F)|$ vertices, 
the $F$ contains at most $d|\mathsf{var}(F)|$ clauses (each variable occurs in at most $d$ clauses), 
and each clause contains at most $2$ literals.  Therefore, $2d |\mathsf{var}(F)| \geq \size{F}$.  Thus, 
$$\mathsf{sfw}(F) \geq \frac{\min\{1,a\}}{8d} \cdot |\mathsf{var}(F)| \geq \frac{\min\{1,a\}}{16 d^2} \cdot 2d|\mathsf{var}(F)| \geq \frac{\min\{1,a\}}{16 d^2} \cdot \size{F}\text{.}$$  

It follows from Theorem~\ref{th:epsilonspatious} that the OBDD size of $F$ is at least $2^{c \cdot |\mathsf{var}(F)|}$ 
where $c=\min\{1,a\}/ 16 d^2$, and we are done.
\end{proof}
\end{theorem}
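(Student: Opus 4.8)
The plan is to exhibit a concrete class of graph CNFs arising from bounded-degree expanders and then chain together the two results already established: the combinatorial lower bound on subfunction width for expander CNFs (Lemma~\ref{lemma:lowerbound}) and the exponential relationship between subfunction width and OBDD size (Theorem~\ref{th:epsilonspatious}). The key point is that fixing the degree at the minimal admissible value keeps $\mathsf{size}(F)$ linear in the number of variables, so that a \emph{linear} lower bound on subfunction width translates into a bound that is exponential in $\mathsf{size}(F)$ itself, not merely in its square root.

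Concretely, I would first invoke the existence of an expander family $\{G_i \mid i \in \mathbb{N}\}$ from~(\ref{eq:expanders}) with degree $d = 3$, obtaining an expansion constant $a > 0$ and vertex counts $n_i \to \infty$. For each $G_i = (V_i, E_i)$ I form the graph CNF $F_i = \{\{u,v\} \mid uv \in E_i\}$. Before proceeding I must check that this is a legitimate graph CNF in the sense of the paper, i.e.\ that $G_i$ has no isolated vertices; this follows because the expansion inequality~(\ref{eq:expansion}) forces every nonempty vertex set of size at most $n_i/2$ to have a nonempty neighborhood, which rules out isolated vertices (indeed it yields connectedness). I would then read off the promised syntactic restrictions directly from the construction: every clause $\{u,v\}$ consists of exactly two \emph{positive} literals (monotone $2$-CNF), and since $G_i$ has degree at most $3$ each variable occurs in at most $3$ clauses (read $3$ times).

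For the quantitative core I would apply Lemma~\ref{lemma:lowerbound} to each $F_i$, giving $\mathsf{sfw}(F_i) \geq \frac{\min\{1,a\}}{8d}\,|\mathsf{var}(F_i)|$, and then relate $\mathsf{size}(F_i)$ to the number of variables. Since the underlying graph has degree at most $d$, each variable occurs in at most $d$ clauses and each clause contributes $2$ to the size, so $\mathsf{size}(F_i) \leq 2d\,|\mathsf{var}(F_i)|$; substituting this into the subfunction-width bound yields a constant $c = \min\{1,a\}/(16 d^2)$ with $\mathsf{sfw}(F_i) \geq c\,\mathsf{size}(F_i)$. Finally I would apply Theorem~\ref{th:epsilonspatious}, which bounds the OBDD size of any graph CNF below by $2^{\mathsf{sfw}(F_i)}$, to conclude that every $F_i$ has OBDD size at least $2^{c\,\mathsf{size}(F_i)}$, taking $\mathcal{F} = \{F_i \mid i \in \mathbb{N}\}$.

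The genuine difficulty is not in this final assembly, which is essentially bookkeeping, but in the two ingredients I am allowed to assume. The substantive obstacle is isolating the right worst-case structure: one needs a graph parameter (here subfunction width, Definition~\ref{def:spatious}, realized by an almost-induced matching across the prefix cut) that simultaneously forces many distinguishable OBDD nodes and can be lower-bounded for \emph{every} variable ordering. Expansion is precisely what guarantees that, no matter how the first $\lfloor n/2 \rfloor$ variables are selected, the cut between the prefix and its complement supports a large such matching, with the greedy deletion argument of Lemma~\ref{lemma:lowerbound} controlling the loss through $d$ and $a$. For the present theorem specifically, the one place demanding care is keeping the two successive conversions, from $n$ to $\mathsf{size}(F)$ and from $\mathsf{sfw}$ to OBDD size, consistent so that the final exponent remains a fixed positive constant multiple of $\mathsf{size}(F)$.
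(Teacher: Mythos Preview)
Your proposal is correct and follows essentially the same approach as the paper: you invoke the degree-$3$ expander family from~(\ref{eq:expanders}), verify via the expansion property that there are no isolated vertices, apply Lemma~\ref{lemma:lowerbound} to get a linear lower bound on $\mathsf{sfw}(F)$ in terms of $|\mathsf{var}(F)|$, convert this to a linear bound in $\mathsf{size}(F)$ using $\mathsf{size}(F) \leq 2d\,|\mathsf{var}(F)|$, and finish with Theorem~\ref{th:epsilonspatious}. The constant $c = \min\{1,a\}/(16d^2)$ you obtain matches the paper's exactly.
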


\section{Conclusion}\label{sect:conclusion}

We have proved new lower and upper bound results on the OBDD size 
of structurally characterized CNF classes, pushing the frontier significantly 
beyond the current knowledge, as depicted in Figure~\ref{fig:hierarchy}.  
We conclude mentioning that tightening the gap left by this work in the considered hierarchy of 
structural CNF classes seems to require new ideas.  

As far as upper bounds are concerned, the few subterms property 
is a natural source of polynomial upper bounds; for instance, 
the width measure recently introduced by Oztok and Darwiche in the compilation of CNFs into DNNFs 
(a more general formalism than OBDDs), 
once instantiated to OBDDs, is closely related to our subterm width measure \cite{OD14}.  However, 
the frontier charted in this work seems to push the few subterms property to its limits, 
in the sense that natural variable orderings do not yield the few subterms property for classes lying immediately beyond the frontier, 
namely (clause) convex CNFs and bounded clique\hy width CNF classes.

As for lower bounds, the technique based on expander graphs essentially requires bounded degree, 
but the candidate classes for improving lower bounds in our hierarchy, bounded clique\hy width CNFs and beta acyclic CNFs,
have unbounded degree. In fact, in both cases, imposing a degree bound leads to classes
of bounded treewidth~\cite{KLM09} and thus polynomial bounds on the size of OBDD representations.


\begin{thebibliography}{1}
\bibitem{AS00} 
N. Alon and J. Spencer.
\newblock {\em The Probabilistic Method}.
\newblock Wiley, 2000.

\bibitem{Bodlaender1996Arboretum} 
H. Bodlaender.
\newblock A Partial $k$-Arboretum of Graphs with Bounded Treewidth.
\newblock {\em Theor. Comput. Sci.}, 209(1--2): 1--45, 1998.

\bibitem{BodlaenderKloks96}
H. Bodlaender and T. Kloks.
\newblock Efficient and Constructive Algorithms for the Pathwidth and
Treewidth of Graphs
\newblock {\em J. Algorithms}, 21(2): 358--402, 1996.


\bibitem{BW99} 
B. Bollig and I. Wegener.
\newblock Complexity Theoretical Results on Partitioned (Nondeterministic) Binary Decision Diagrams.
\newblock {\em Theory Comput. Syst.}, 32(4): 487--503, 1999.

\bibitem{BL76} 
K. Booth and G. Lueker.
\newblock Testing for the Consecutive Ones Property, Interval Graphs, and Graph Planarity Using PQ-Tree Algorithms.
\newblock {\em J. Comput. Syst. Sci.}, 13(3):335--379, 1976.

\bibitem{BLS99} 
A. Brandst{\"a}dt, V. Le, and J. Spinrad.
\newblock {\em Graph Classes: a Survey}.
\newblock SIAM, 1999.

\bibitem{BBCM14} 
J. Brault-Baron, F. Capelli, and Stefan Mengel.
\newblock Understanding Model Counting for $\beta$-Acyclic CNF-Formulas.
\newblock Preprint in {\em CoRR}, abs/1405.6043, 2014.

\bibitem{ChenFominLiuLuVillanger07} 
J. Chen, F. Fomin, Y. Liu, S. Lu, and Y. Villanger.
\newblock Improved Algorithms for the Feedback Vertex Set Problems.
\newblock In {\em WADS}, 2007.


\bibitem{CO00} 
B. Courcelle and S. Olariu.
\newblock Upper Bounds to the Clique Width of Graphs.
\newblock {\em Discrete Appl. Math.}, 101(1--3): 77--144, 2000.

\bibitem{D93} 
S. Devadas.
\newblock Comparing Two-Level and Ordered Binary Decision Diagram Representations of Logic Functions.
\newblock {\em IEEE T. Comput. Aid. D.}, 12(5): 722--723, 1993.

\bibitem{D2010}
R. Diestel.
\newblock {\em Graph Theory}.
\newblock Springer, 2010.

\bibitem{DowneyFellows13} 
R. Downey and M. Fellows.
\newblock {\em Fundamentals of Parameterized Complexity}.
\newblock Springer, 2013.

\bibitem{FPV05} 
A. Ferrara, G. Pan, and M. Vardi. 
\newblock Treewidth in Verification: Local vs Global. 
\newblock In {\em LPAR}, 2005.

\bibitem{FlumGrohe} 
J. Flum and M. Grohe.
\newblock {\em Parameterized Complexity Theory}.
\newblock Springer, 2006.

\bibitem{FFK88} 
M. Fujita, H. Fujisawa, and N. Kawato.
\newblock Evaluation and Improvements of Boolean Comparison Method Based on Binary Decision Diagrams.
\newblock In {\em ICCAD}, 1988.


\bibitem{Gallo84}
G. Gallo.
\newblock An O(n log n) algorithm for the convex bipartite matching problem.
\newblock{\em Oper. Res. Lett.}, 3(1): 31--34, 1984.

\bibitem{Glover67}
F. Glover.
\newblock Maximum matching in a convex bipartite graph.
\newblock{\em Naval Research Logistics Quarterly.}, 14(3): 313--316, 1967.

\bibitem{HLW06} 
S. Hoory, N. Linial, and A. Wigderson.
\newblock Expander Graphs and their Applications. 
\newblock {\em Bull. Amer. Math. Soc.}, 43:439--561, 2006.

\bibitem{J12} 
S. Junkna.
\newblock {\em Boolean Function Complexity}.
\newblock Springer, 2012.

\bibitem{KLM09} 
M. Kaminski, V. Lozin, and M. Milanic.
\newblock Recent Developments on Graphs of Bounded Clique-Width.
\newblock {\em Discrete Appl. Math.}, 157(12): 2747--2761, 2011.

\bibitem{KKLV11} 
J. K{\"o}bler, S. Kuhnert, B. Laubner, and O. Verbitsky. 
\newblock Interval Graphs: Canonical Representations in Logspace.
\newblock {\em SIAM J. Comput.}, 40(5): 1292--1315, 2011.

\bibitem{LR04} 
V. Lozin and D. Rautenbach. 
\newblock Chordal Bipartite Graphs of Bounded Tree- and Clique-Width.
\newblock {\em Discrete Math.}, 283: 151--158, 2004.

\bibitem{L11} 
A. Lubotzky.
\newblock Expander Graphs in Pure and Applied Mathematics.
\newblock Preprint in {\em CoRR}, abs/1105.2389, 2011.

\bibitem{OD14} 
U. Oztok and A. Darwiche.
\newblock CV-Width: A New Complexity Parameter for CNFs.
\newblock In {\em ECAI}, 2014.

\bibitem{R14} 
I. Razgon.
\newblock On OBDDs for CNFs of Bounded Treewidth.
\newblock In {\em KR}, 2014.

\bibitem{STV14} 
S. S{\ae}ther, J. Telle, and M. Vatshelle.
\newblock Solving MaxSAT and \#SAT on Structured CNF Formulas.
\newblock In {\em SAT}, 2014.

\bibitem{SY93}
G. Steiner and S. Yeomans.
\newblock Level Schedules for Mixed-model, Just-in-Time Processes.
\newblock In {\em Manage. Sci.}, 39: 728--735, 1993.

\bibitem{W00} 
I. Wegener.
\newblock {\em Branching Programs and Binary Decision Diagrams}.
\newblock SIAM, 2000.

\end{thebibliography}
\end{document}